\newcommand{\amolnote}[1]{\noindent{\textcolor{red}{\bf Amol note: #1}}}
\newcommand{\jiannote}[1]{\noindent{\textcolor{red}{\bf Jian note: #1}}}
\long\def\ignore#1{}
\newtheorem{theorem}{Theorem}
\newtheorem{definition}{Definition}
\newtheorem{lemma}{Lemma}
\newtheorem{corollary}{Corollary}
\newtheorem{example}{Example}
\newenvironment{proof}{\noindent\textbf{Proof: }\ignorespaces}{}
\newcommand{\qed}{\hspace*{\fill}$\Box$\medskip}
\newcommand{\calT}{\mathcal{T}}
\newcommand{\calF}{\mathcal{F}}
\newcommand{\calC}{\mathcal{C}}
\newcommand{\Prob}{\mathsf{Pr}}
\newcommand{\calX}{\mathcal{X}}
\newcommand{\rank}{\Upsilon}
\newcommand{\eat}[1]{}
\newcommand{\Cvee}{\textcircled{\small{$\vee$}}}
\newcommand{\Cwedge}{\textcircled{\small{$\wedge$}}}
\newcommand{\Topk}{\emph{Top-$\mathrm{k}$}}
\newcommand{\UTK}{\emph{UTop-$\mathrm{k}$}}
\newcommand{\URK}{\emph{URank-$\mathrm{k}$}}
\newcommand{\PRFs}{$PRF^*$}
\newcommand{\PTK}{$PT-\rmk$}
\newcommand{\rmk}{\mathrm{k}}
\newcommand{\Exp}{\mathsf{E}}
\newcommand{\bfP}{\mathrm{P}}
\newcommand{\bfr}{\mathbf{r}}
\newcommand{\bfC}{\mathbf{C}}
\newcommand{\bfM}{\mathbf{M}}
\newcommand{\bfbr}{\bar{\mathbf{r}}}
\newcommand{\bftr}{\tilde{\mathbf{r}}}
\newcommand{\dist}{\mathsf{d}}
\newcommand{\median}{\mathtt{median}}
\newcommand{\concluster}{C{\footnotesize ONSENSUS}-C{\footnotesize LUSTERING}}
\newcommand{\rankagg}{R{\footnotesize ANK}-A{\footnotesize GGREGATION}}
\begin{document}

\title{Computing Consensus Answers in Probabilistic Databases} 
\title{Consensus Answers for Queries over Probabilistic Databases} 
\author{
Jian Li and Amol Deshpande \\[1pt]
\{lijian, amol\}@cs.umd.edu \\[1pt]
University of Maryland at College Park
}

\date{}

\maketitle

\begin{abstract}
We address the problem of finding a ``best''  deterministic query answer
to a query over a probabilistic database.
For this purpose, we propose the notion of a consensus world (or a consensus answer)
which is a deterministic world (answer) that minimizes the expected distance
to the possible worlds (answers).
This problem can be seen as a generalization of the well-studied inconsistent information aggregation
problems (e.g. rank aggregation) to probabilistic databases.
We consider this problem for various types of queries
including SPJ queries, \Topk\ queries, group-by aggregate queries, and clustering. For different
distance metrics, we obtain
polynomial time optimal or approximation algorithms for computing the consensus
answers (or prove NP-hardness).
Most of our results are for a general probabilistic database model, called
{\em and/xor tree model}, which significantly generalizes previous probabilistic database
models like x-tuples and block-independent disjoint models, and is of
independent interest.
\end{abstract}

\section{Introduction}

There is an increasing interest in uncertain and
probabilistics databases arising in application domains such as
information retrieval~\cite{dalvi:vldb04,sen:icde07}, recommendation
systems~\cite{re:icde07,re:vldb07}, mobile object data management~\cite{cheng:sigmod03}, information
extraction~\cite{gupta:vldb06}, data integration~\cite{andritsos:icde06} and
sensor networks~\cite{deshpande:vldb04}.
Supporting complex queries and decision-making on probabilistic databases is significantly
more difficult than in deterministic databases, and the key challenges include
defining proper and intuitive semantics for queries over them, and developing efficient
query processing algorithms.

The common semantics in probabilistic databases are the ``possible worlds'' semantics, where a
probabilistic database is considered to correspond to a probability distribution over a set of
deterministic databases called ``possible worlds''.
Therefore, posing queries over such a probabilistic database generates
a probability distribution over a set of
deterministic results which we call ``possible answers''.
However, a full list of possible answers together with their probabilities
is not desirable in most cases since the size of the list could be exponentially large,
and the probability associated with each single answer is extremely small.
One approach to addressing this issue is to ``combine'' the possible answers somehow to
obtain a more compact representation of the result. For select-project-join queries,
for instance, one proposed approach is to union all the possible answers, and compute
the probability of each result tuple by adding the probabilities of all possible answers
it belongs to~\cite{dalvi:vldb04}. This approach, however, can not be easily extended to other
types of queries like ranking or aggregate queries.

Furthermore, from the user or application perspective, despite the probabilistic nature of
the data, a single, deterministic query result would be desirable in most cases, on which
further analysis or decision-making could be based. For SPJ queries, this is often
achieved by ``thresholding'', i.e., returning only the result tuples with a sufficiently
high probability of being true. For aggregate queries, often expected values are returned
instead~\cite{DBLP:conf/pods/JayramMMV07}. For ranking queries, on the other hand, a
range of different approaches have been proposed to find the true ranking of the tuples. These
include \UTK , \URK ~\cite{soliman:icde07}, probabilistic threshold \Topk\ function \cite{conf/sigmod/HuaPZL08},
Global \Topk~\cite{conf/dbrank/ZhangC08}, {\em expected rank}~\cite{Cormode09}, and so on.
Although these definitions seem to reason about the ranking over probabilistic databases in some ``natural'' ways, there is a lack of a unified and systematic analysis framework
to justify
their semantics and to discriminate the usefulness of one from another.

In this paper, we consider the problem of combining
the results for all possible worlds in a systematic way
by putting it in the context of {\em inconsistent information aggregation}
which has been studied extensively in numerous contexts over the last half century.
In our context, the set of different query answers returned from possible worlds
can be thought as inconsistent information which we need to aggregate to obtain
a single representative answer.
To the best of our knowledge, this connection between query processing in probabilistic databases
and inconsistent information aggregation, though natural, has never been realized before
in any formal and mathematical way.
Concretely, we propose the notion of {\em the consensus answer}.
Roughly speaking, the consensus answer
is a answer that is {\em closest} to the answers of the possible worlds in expectation.
To measure the closeness of two answers $\tau_1$ and $\tau_2$,
we have to define suitable distance function $\dist(\tau_1,\tau_2)$
over the answer space.
For example, if an answer is a vector,
we can simply use the $L_2$ norm; whereas in other cases, for instance, \Topk\ queries,
the definition of $\dist$ is more involved.
If the most consensus answer can be taken from any point in the answer space,
we refer it as the {\em mean answer}.
A {\em median answer} is defined similarly except that the median
answer must be the answer of some possible world with non-zero probability.

From a mathematical perspective, if the distance function is properly defined to reflect the closeness of the answers,
the most consensus answer is perhaps the best deterministic
representative of the set of all possible answers
since it can be thought as the centroid of the set of points corresponding to the possible answers.
Our key results can be summarized as follows:
\begin{list}{$\bullet$}{\leftmargin 0.15in \topsep 2pt \itemsep 1pt}
    \item (Probabilistic And/Xor Tree) We propose a new model for modeling correlations,
called the {\em probabilistic and/xor tree} model, that can capture two types of correlations, mutual exclusion and coexistence.
This model generalizes the previous models such as x-tuples, and block-independent disjoint tuples model.
More important, this model admits an elegant generating functions based framework for many types of
probability computations.
    \item (Set Distance Metrics) We show that the mean and the median world can be
    found in polynomial time for the {\em symmetric difference} metric for and/xor tree model.
    For the Jaccard distance metric, we present a polynomial time algorithm to compute the mean and
    median world for tuple independent database.
    \item (\Topk\ ranking Queries)
    The problem of aggregating inconsistent rankings has been well-studied under the name of {\em rank aggregation}~\cite{conf/www/rankaggregation}.
    We develop polynomial time algorithms for computing mean and median \Topk\ answers under the
    symmetric difference metric, and the mean answers under {\em intersection metric} and {\em generalized Spearman's footrule distance}~\cite{fagin:sjdm}, for
    the and/xor tree model.
    \item (Groupby Aggregates) 
        For group by count queries, we present a 4-approximation to the problem of finding a median answer (finding
        mean answers is trivial).
    \item (Consensus Clustering) We also consider the consensus clustering problem for the and/xor tree model
    and get a constant approximation by extending a previous result~\cite{journal/jacm/ailon08}.
\end{list}

\smallskip
\smallskip
\smallskip
\noindent{\bf Outline:} We begin with a discussion of the related work (Section \ref{sec:related}).
We then define the probabilistic and/xor tree model (Section \ref{sec:model}), and present
a generating functions-based method to do probability computations on them (Section \ref{sec:and-xor}).
The bulk of our key results are presented in Sections \ref{sec:set} and \ref{sec:topk} where
we address the problem of finding consensus worlds for different set distance metrics
and for \Topk\ ranking queries respectively.
We then briefly discuss finding consensus worlds for group-by {\em count} aggregate queries and clustering
queries in Section \ref{sec:other types of queries}.

\eat{
\textbf{Our Contributions.} We summarize our contribution as follows.
\begin{enumerate}
\item We propose the notion of {\em consensus answer} which is a answer with the minimum expected distance
to the answers of possible worlds. We further define the {\em mean answer} and {\em median answer}
to distinguish whether the answer space is restricted to be the answers of worlds with non-zero probability.
\item For the extensively studied \Topk\ query,
we give polynomial time algorithm for computing the mean answer for normalized asymmetric difference metric,
intersection metric and a variant of spearman footrule distance and $2$-approximation for the Kemeny distance.
\item We show how to aggregate all possible worlds
in the tuple-level uncertainty model under set semantics.
Our results are polynomial time algorithms for symmetric difference and Jarccard similarity.
\item We consider the query ``select count group by'' in the attribute-level uncertainty model.
While the computation of the mean answer is trivial, we present a $2$-approximation for finding the
median answer by presenting a nontrivial deterministic algorithm to find the median answer nearest to the mean answer.
We also show the NP-hardness for computing the median answer for a slightly general query
``select sum group by''.
\item We propose the {\em probabilistic and/xor tree} model which captures two types of probabilistic
correlations, mutually exclusion and coexistence, in a hierarchical way, thus generalizes several previous proposed
probabilistic database models. We also present efficient algorithms for
many types of probability computation in this model based on generating functions.
\end{enumerate}
}
\ignore{
    \item It may not be possible to combine the results computed (implicitly) for each possible
        world into a single probabilistic relation. The best example is a top-k ranking query
        where we get a set of ranked tuples as the answer in
        each possible world, and there is no easy way to combine the results into a single answer.
        Same appears to be true for clustering queries.
\end{itemize}
}



\section{Related Work}
\label{sec:related work}
\label{sec:related}
There has been much work on managing probabilistic, uncertain, incomplete, and/or fuzzy data in
database systems and this area has received renewed attention in the last few years (see e.g.
\cite{imielinski:jacm84,barbara:kde92,lakshmanan:tods97,grahne,fuhr:is97,buckles,cheng:sigmod03,dalvi:vldb04,widom:cidr05,debulletin:march2006}).
This work has spanned a range of issues from theoretical development of data models and data languages, to practical
implementation issues such as indexing techniques. In terms of representation power, most of this work has
either assumed independence between the tuples~\cite{fuhr:is97,dalvi:vldb04}, or has restricted the correlations
that can be modeled~\cite{barbara:kde92,lakshmanan:tods97,andritsos:icde06,sarma:icde06}. Several
approaches for modeling complex correlations in probabilistic databases have also been
proposed~\cite{sen:icde07,antova:sigmod07,sen:vldb08,wang:vldb08}.

For efficient query evaluation over probabilistic databases, one of the key results is the
dichotomy of conjunctive query evaluation on tuple-independent probabilistic databases by Dalvi and
Suciu~\cite{dalvi:vldb04,conf/pods/DalviS07}. Briefly the result states that the complexity of evaluating
a conjunctive query over tuple-independent probabilistic
databases is either PTIME or \#P-complete. For the former case, Dalvi and Suciu~\cite{dalvi:vldb04} also present
an algorithm to find what are called {\em safe query plans}, that permit correct {\em extensional} evaluation of the
query. Unfortunately the problem of finding consensus answers appears to be much harder;
this is because even if a query has a safe plan, the result tuples may still be arbitrarily correlated.

In recent years, there has also been much work on efficiently answering different types of queries over probabilistic databases.
Soliman et al.~\cite{soliman:icde07} first considered the problem of ranking over
probabilistic databases, and proposed two ranking functions to combine the tuple scores and probabilities.
Yi et al.~\cite{conf/icde/YiLSK08} presented improved algorithms for the same ranking functions.
Zhang and Chomicki~\cite{conf/dbrank/ZhangC08} presented a desiderata for
ranking functions and propose Global \Topk\ queries. Ming Hua et
al.~\cite{conf/icde/HuaPZL08,conf/sigmod/HuaPZL08} recently
presented a different ranking function called {\em Probabilistic threshold \Topk\ queries}.
Finally, Cormode et al.~\cite{Cormode09} also present a semantics of ranking functions and
a new ranking function called {\em expected rank}. In a recent work, we proposed a parameterized
ranking function, and presented general algorithms for evaluating them~\cite{tech}
Other types of queries have also been recently considered over probabilistic databases
(e.g. clustering~\cite{cormode:pods08},
nearest neighbors~\cite{beskales:vldb08} etc.).

The problem of aggregating inconsistent information from different sources arises in numerous disciplines
and has been studied in different contexts over decades.
Specifically, the \rankagg\ problem aims at combining $k$ different complete ranked lists $\tau_1,\ldots,\tau_k$
on the same set of objects into a single ranking, which is the best description of the combined preferences in the given lists.
This problem was considered as early as 18th century when Condorcet and Borda proposed
a voting system for elections~\cite{book/condorcet,Borda81}.
In the late 50's, Kemeny proposed the first mathematical criterion
for choosing the best ranking~\cite{Kemeny59}.
Namely, the Kemeny optimal aggregation $\tau$ is the ranking that minimizes
$\sum_{i=1}^k \dist(\tau,\tau_i)$, where $\dist(\tau_i,\tau_j)$ is the number of pairs of elements
that are ranked in different order in $\tau_i$ and $\tau_j$ (also called Kendall's tau distance).
While computing the Kemeny optimal is shown to be NP-hard~\cite{dwork_rankaggr_revisited}, 2-approximation
can be easily achieved by picking the best one from $k$ given ranking lists.
The other well-known 2-approximation is from the fact the Spearman footrule distance,
defined to be $\dist_F(\tau_i,\tau_j)=\sum_t|\tau_i(t)-\tau_j(t)|$, is within twice
the Kendall's tau distance and the footrule aggregation can be done optimally in polynomial time~\cite{conf/www/rankaggregation}.
Ailon et al.~\cite{journal/jacm/ailon08} improve the approximation ratio to $4/3$.
We refer the readers to \cite{book/hodge00} for a survey on this problem.
For aggregating \Topk\ answers, Ailon~\cite{conf/soda/ailon07} recently obtained an $3/2$-approximation
based on rounding an LP solution.

The \concluster\ problem asks for the best clustering of a set of elements which minimizes the
number of pairwise disagreements with the given $k$ clusterings. It is known to be NP-hard~\cite{Wakabayashi98}
and a 2-approximation can also be obtained by picking the best one from the given $k$ clusterings.
The best known approximation ratio is $4/3$ due to Ailon et al.~\cite{journal/jacm/ailon08}.
Recently Cormode et al.~\cite{cormode:pods08} proposed approximation algorithms for
$k$-center and $k$-median clustering problems under attribute-level uncertainty in
probabilistic databases.

\section{Preliminaries}
\label{sec:model}
We begin with reviewing the possible worlds semantics, and
introduce the probabilistic and/xor tree model.

\subsection{Possible World Semantics}

We consider probabilistic databases with both tuple-level uncertainty (the existence of a tuple is uncertain)
and attribute-level uncertainty (a tuple attribute value is uncertain). Specifically,
we denote a probabilistic relation by $R^{P}(K;A)$,
where $K$ is the {\em key} attribute, and $A$ is the {\em value} attribute\footnote{For clarity, we will assume
singleton key and value attributes.}.
For a particular tuple in $R^P$, its key attribute is certain and is sometimes called
the possible worlds key.
$R^P$ is assumed to correspond to a probability space $(PW, \Prob)$ where the set of outcomes is a
set of deterministic relations, which we call {\em possible worlds}, $PW=\{pw_1, pw_2,...., pw_N\}$.
Note that two tuples can not have the same value for the key attribute in a single possible world.
Because of the typically exponential size of $PW$, an explicit
possible worlds representation is not feasible, and hence the semantics
are usually captured implicitly by probabilistic models with polynomial size specification.

Let $T$ denote the set of tuples in all possible worlds. For ease of notation, we will use $t\in pw$ in place of
``$t$ appears in the possible world $pw$'', $\Prob(t)$ to denote $\Prob($t$\textrm{ is present})$
and $\Prob(\neg t)$ to denote  $\Prob($t$\textrm{ is not present})$.

Further, for a tuple $t^P \in R^P$, we call the certain tuples corresponding to it (with the same key value)
in the union of the possible worlds, its {\em alternatives}.


\noindent\textbf{Block-Independent Disjoint (BID) Scheme:}
BID is one of the more popular models for probabilistic databases, and assumes that different probabilistic
tuples (with different key values) are independent of each other~\cite{dalvi:vldb04,widom:cidr05,conf/pods/DalviS07,conf/dbpl/re07}.
Formally, a BID scheme has the relational schema of the from $R(K;A;\Prob)$
where $K$ is the possible worlds key, $A$ is the value attribute, and $\Prob$
captures the probability of the corresponding tuple alternative.

\subsection{Probabilistic And/Xor Tree}
We generalize the block-independent disjoint tuples model, which can capture {\em mutual exclusion}
between tuples, by adding support for {\em mutual co-existence}, and allowing these
to be specified in a hierarchical manner. Two events satisfy the mutual co-existence correlation if in any possible world, either both happen or neither occurs.
We model such correlations using a {\em probabilistic and/xor tree} (or and/xor tree for short), which also
generalizes the notions of {\em x-tuples}~\cite{sarma:icde06,conf/icde/YiLSK08},
$p$-or-sets~\cite{conf/pods/DalviS07} and tuple independent databases.
We first considered this model for tuple-level uncertainty in an earlier paper~\cite{tech},
and generalize it here to handle attribute-level uncertainty.


We use $\Cvee$ (or) to denote mutual exclusion and $\Cwedge$ (and) for coexistence.
Figure \ref{eg_possibleworld} shows two examples of probabilistic and/xor trees.
Briefly, the leaves of the tree correspond to the tuple alternatives (we abuse
the notation somewhat and use $t_i$ to denote both the tuple, and its key value).
The first tree captures a relation with four independent tuples,
$t_1, t_2, t_3, t_4$, each with two alternatives, whereas the second tree
shows how we can capture arbitrary possible worlds using an and/xor tree
(Figure \ref{eg_possibleworld}(ii) shows the possible worlds corresponding to that tree).


\eat{
\begin{minipage}{0.35\linewidth}
{
\begin{tabular}{|c|c|}
\hline
Possible Worlds & Prob \\ \hline
$pw_{1}=\{(t_3,6),(t_2,5),(t_1,1)\}$ & .3 \\
$pw_{2}=\{(t_3,9),(t_1,7),(t_4,0)\}$ & .3 \\
$pw_{3}=\{(t_3,8),(t_4,4),(t_5,3)\}$ & .4 \\
\hline
\end{tabular}
\centerline{(i)}
}
\end{minipage}
\begin{minipage}{0.70\linewidth}
{\includegraphics[height=0.9in]{andor2}}
\end{minipage}
}

\begin{figure*}[t]
    \centerline{\includegraphics[width=1.05\textwidth]{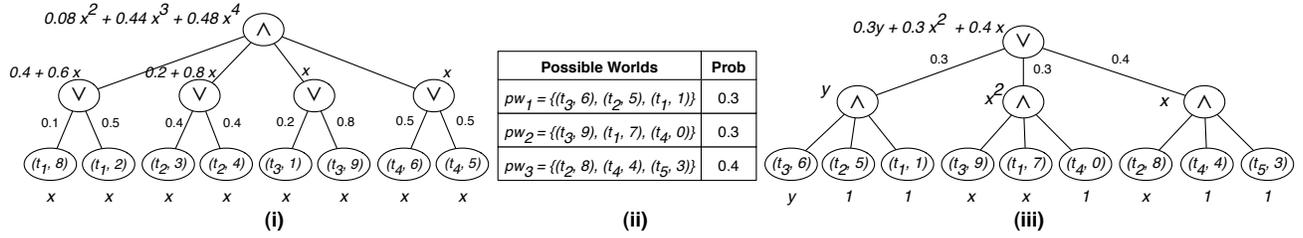}}
\caption{
(i) The and/xor tree representation of a set of block-independent disjoint tuples; the generating function
obtained by assigning the same variable $x$ to all leaves gives us the distribution over the sizes of the possible worlds.
(ii) Example of a highly correlated probabilistic database with $3$ possible worlds
and (iii) the and/xor tree that captures the correlation; the coefficient of $y$ (0.3) is $\Prob(r(t_3, 6)=1)$ (i.e., prob. that that alternative of $t_3$
is ranked at position $1$).
}
\label{eg_possibleworld}
\end{figure*}

Now, let us formally define a probabilistic and/xor tree. In tree $\calT$, we denote the set of children of node $v$ by $Ch_{\calT}(v)$ and
the least common ancestor of two leaves $l_1$ and $l_2$
by $LCA_{\calT}(l_1,l_2)$. We omit the subscript if the context is clear.

\begin{definition}
A probabilistic and/xor tree $\calT$ represents the mutual exclusion and co-existence
correlations in a probabilistic relation $R^P(K; A)$,
where $K$ is the possible worlds key, and $A$ is the value attribute.
In $\calT$, each leaf is a key-attribute pair (a tuple alternative),
and each inner node has a mark, $\Cvee$ or $\Cwedge$.
For each $\Cvee$ node $u$ and each of its children $v\in Ch(u)$,
there is a nonnegative value $\Prob(u,v)$ associated with the edge $(u,v)$.
Moreover, we require
\begin{list}{$\bullet$}{\leftmargin 0.25in \topsep 3pt \itemsep 2pt}
\item (Probability Constraint) $\sum_{v:v\in Ch(u)}\Prob(u,v)\leq 1$.
\item (Key Constraint) For any two different leaves $l_1,l_2$ holding the same key,
$LCA(l_1,l_2)$ is a $\Cvee$ node\footnote{
The key constraint is imposed to avoid two leaves with the same key coexisting in a possible world.
}.
\end{list}
Let $\calT_v$ be the subtree rooted at $v$ and $Ch(v)=\{v_1,\ldots,v_\ell\}$.
The subtree $\calT_v$ inductively defines a random subset $S_v$ of its leaves
by the following independent process:
\begin{list}{$\bullet$}{\leftmargin 0.25in \topsep 2pt \itemsep 1pt}
\item
If $v$ is a leaf, $S_v=\{v\}$.
\item
If $\calT_v$ roots at a $\Cvee$ node,
then \\ \mbox{\ \ \ \ \ \ \ } $S_v= \left\{
             \begin{array}{ll}
               S_{v_i}         & \hbox{with prob $\Prob(v,v_i)$} \\
               \emptyset       & \hbox{with prob $1-\sum_{i=1}^\ell \Prob(v,v_i)$}
             \end{array}
           \right.
$
\item
If $\calT_v$ roots at a $\Cwedge$ node,
then $S_v=  \cup_{i=1}^\ell S_{v_i}$
\end{list}
\end{definition}

\eat{
\jiannote{
Let consider a few more examples.
(1) an example for independent tuples (2)an example can't be model by BID
(3) an example for listing all possible worlds.
}
}
\noindent{Probabilistic} and/xor trees can capture more complicated correlations than the prior
models such as the BID model or x-tuples.
We remark that Markov or Bayesian network models are able to capture more general correlations~\cite{sen:icde07},
however, the structure of the model is more complex and probability
computations on them (inference) is typically exponential in the treewidth of the model.
The treewidth of an and/xor tree (viewing it as a Markov network) is not bounded, and hence
the techniques developed for those models can not be used to obtain a polynomial time algorithms
for and/xor trees.


\eat{
Figure \ref{eg_possibleworld}(i) shows an example of probabilistic
relation that models the data from a traffic monitoring
application~\cite{soliman:icde07}, where the tuples represent
automatically captured traffic data.
(ii) shows the list of all possible worlds.
We note that the tuples is correlated. For instance, tuple $t_2$ and $t_3$
are mutually exclusive since otherwise they indicate the same car being in different locations
at almost the same time.
We will introduce the probabilistic model to capture this type of correlation in following
subsection.
}

\eat{
Amol: Okay
\subsubsection{Markov Network?}
\jiannote{
maybe we don't need this.
}
}

\subsection{Computing Probabilities on And/Xor Trees}
\label{sec:and-xor}
Aside from the representational power of the and/xor tree model,
perhaps its best feature is that many types of probability computations
can be done efficiently and elegantly on them using {\em generating functions}.
In our prior work~\cite{tech}, we used a similar technique for computing
ranking functions for tuple-level uncertainty model. Here we generalize
the idea to a broader range of probability computations.

We denote the and/xor tree by $\calT$.
Suppose $\calX=\{x_1, x_2, \ldots \}$ is a set of variables. Define a mapping $s$
which associates each leaf $l \in \calT$ with a variable $s(l) \in \calX$.
Let $\calT_v$ denote the subtree rooted at $v$ and let $v_1, \ldots, v_l$ be $v$'s children.
For each node $v \in \calT$, we define a generating function $\calF_v$ recursively: 

\begin{list}{$\bullet$}{\leftmargin 0.13in \topsep 2pt \itemsep 2pt}
\item
If $v$ is a leaf, $\calF^i_v(\calX)=s(v)$.
\item
If $v$ is a $\Cvee$ node, \\[5pt]
$
\calF_v(\calX)=(1-\sum_{h=1}^l p(v,v_h))+ \sum_{h=1}^l \calF_{v_h}(\calX)\cdot p(v,v_h)
$
\item
If $v$ is a $\Cwedge$ node, $\calF^i_v(\calX)=\prod_{h=1}^l \calF_{v_h}(\calX) $.
\end{list}

\smallskip
\smallskip
The generating function $\calF(\calX)$ for tree $\calT$ is the one defined above for the root.
It is easy to see, if we have a constant number of variables, the polynomial can be expanded in
the form of $\sum_{i_1,i_2,\ldots}c_{i_1,i_2\ldots}x_1^{i_1}x_2^{i_2}\ldots$ in polynomial time.

Now recall that each possible world $pw$ contains a subset of the leaves of $\calT$ (as dictated
by the $\Cvee$ and $\Cwedge$ nodes).
The following theorem characterizes the relationship between the coefficients of $\calF$ and
the probabilities we are interested in.
\begin{theorem}
\label{thm_generating}
The coefficient of the term $\prod_{j}x_j^{i_j}$ in $\calF(\calX)$ is the total probability of
the possible worlds for which, for all $j$, there are exactly $i_j$ leaves associated with variable $x_j$.
\end{theorem}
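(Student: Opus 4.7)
The plan is to prove the theorem by structural induction on the and/xor tree $\calT$, showing that the stronger statement holds for every subtree $\calT_v$: the coefficient of $\prod_j x_j^{i_j}$ in $\calF_v(\calX)$ equals $\Prob[\text{for all } j, S_v \text{ contains exactly } i_j \text{ leaves with } s(l)=x_j]$, where $S_v$ is the random subset defined by the sampling process on $\calT_v$.

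For the base case, if $v$ is a leaf with $s(v) = x_j$, then $\calF_v = x_j$ and $S_v = \{v\}$ deterministically, so the only nonzero term is the coefficient of $x_j^1$ (times $1$ for all other variables), which equals $1$, matching the probability.

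For the inductive step on a $\Cvee$ node $v$ with children $v_1,\ldots,v_\ell$, the sampling process selects the random set from subtree $\calT_{v_h}$ with probability $\Prob(v,v_h)$, and produces the empty set with probability $1-\sum_h \Prob(v,v_h)$. By the law of total probability and mutual exclusion of these cases, the probability of producing a configuration with $i_j$ leaves labeled $x_j$ for each $j$ decomposes exactly along the two lines of the definition of $\calF_v$: the constant $1-\sum_h \Prob(v,v_h)$ contributes to the all-zeros coefficient only, while the $\Prob(v,v_h)\cdot\calF_{v_h}$ summand transfers coefficients from $\calF_{v_h}$ (which by the inductive hypothesis are the corresponding probabilities in $\calT_{v_h}$). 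Matching coefficients on both sides completes this case.

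For a $\Cwedge$ node $v$, the random set is $S_v = \bigcup_h S_{v_h}$, and crucially the subtrees $\calT_{v_h}$ are disjoint (they share no leaves, being rooted at distinct children), so $|S_v|$ decomposes as a disjoint union and the count of $x_j$-labeled leaves in $S_v$ equals $\sum_h |S_{v_h}\cap s^{-1}(x_j)|$. Since the sampling processes on the $\calT_{v_h}$ are performed independently, the joint probability of configuration $(i_1,i_2,\ldots)$ is a convolution over all decompositions $i_j = \sum_h i_j^{(h)}$ of products of the per-child probabilities. This is precisely the coefficient of $\prod_j x_j^{i_j}$ in the product $\prod_h \calF_{v_h}(\calX) = \calF_v(\calX)$, again using the inductive hypothesis on each child.

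The only subtlety I anticipate is making sure the $\Cwedge$ step is justified: one must invoke both independence of the subtree sampling processes and the disjointness of the leaf sets of sibling subtrees, so that multiplication of generating functions (which corresponds to adding disjoint exponent contributions) faithfully represents the joint distribution. The key constraint in the definition is not needed for this counting step itself, only to ensure that $S_v$ corresponds to a valid possible world; the generating-function identity goes through in any case. Applying the induction to the root of $\calT$ yields the theorem statement.
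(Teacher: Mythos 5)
Your proposal is correct and follows exactly the route the paper indicates: structural induction on the tree, with the $\Cvee$ case handled by the law of total probability and the $\Cwedge$ case by independence and disjointness of sibling subtrees, matching the convolution structure of the product of generating functions. The paper omits the details of this induction entirely, so your write-up simply supplies what the authors left out, including the correct observation that the key constraint plays no role in the counting argument itself.
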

The proof is by induction on the tree structure and is omitted.

\begin{example}
If we associate all leaves with the same variable $x$, the coefficient of $x^i$ is equal to $\Prob(|pw|=i)$.
\end{example}
The above can be used to obtain a distribution on the possible world sizes (Figure \ref{eg_possibleworld}(i)).
\begin{example}
If we associate a subset $S$ of the leaves with variable $x$, and other leaves with constant $1$,
the coefficient of $x^i$ is equal to $\Prob(|pw\cap S|=i)$.
\end{example}

\begin{example}
Next we show how to compute $Pr(r(t) = i)$ (i.e., the probability $t$ is ranked at position $i$), where $r(t)$ denote the {\em rank} of the tuple
in a possible world by some {\em score} metric. Assume  $t$ only has one alternative, $(t, a)$, and hence only
one possible value of score, $s$.
Then, in the and/xor tree $\calT$, we associate all leaves with key other than $t$
and score value larger than $s$ with variable $x$,
and the leaf $(t,a)$ with variable $y$, and the rest of leaves with constant $1$.
Then, the coefficient of $x^{j-1}y$ in the generating function is exactly
$Pr(r(t) = i)$. If the tuple has multiple alternatives, we can compute $Pr(r(t) = i)$ for it
by summing up the probabilities for the alternatives.
\end{example}
See Figure~\ref{eg_possibleworld}(iii) for an example.

\subsection{Problem Definition}

We denote the domain of answers for a query
by $\Omega$ and the distance function between two answers by $\dist()$.
Formally, we define the most consensus answer $\tau$
to be a feasible answer to the query such that the expected distance between $\tau$
and the answer $\tau_{pw}$ of the (random) world $pw$ is minimized,
i.e,
$\tau=\arg\min_{\tau'\in \Omega}\{\Exp[\dist(\tau',\tau_{pw})]\}$.

We call the most consensus answer in $\Omega$ {\em the mean answer} when $\Omega$
is the set of all feasible answers.
If $\Omega$ is restricted to be the set of possible answers (answers of some possible worlds with non-zero probability),
we call the most consensus answer in $\Omega$ {\em the median answer}.
Taking the example of the \Topk\ queries,
the median answer must be the \Topk\ answer of some possible world while
the mean answer can be any sorted list of size $\rmk$.

\section{Set Distance Measures}
\label{sec:set}
We first consider the problem of finding the consensus world for a given probabilistic database, under
two set distance measures: symmetric difference, and Jaccard distance.

\subsection{Symmetric Difference}
The symmetric difference distance between two sets $S_1$, $S_2$ is defined to be $\dist_\Delta(S_1, S_2) = |S_1 \Delta S_2| = |(S_1\setminus S_2) \cup (S_2 \setminus S_1)|$.
Note that two different alternatives of a tuple are treated as different tuples here.

\begin{theorem}
The mean world under the symmetric difference distance is the set of all tuples with probability $> 0.5$.
\end{theorem}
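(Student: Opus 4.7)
The plan is to exploit linearity of expectation to decouple the optimization across tuples, so that the optimal answer is chosen independently for each tuple. Let $T$ denote the set of all tuple alternatives (leaves of the and/xor tree), and let $S \subseteq T$ be a candidate mean world. For a random possible world $pw$, I would write
\[
\dist_\Delta(S, pw) \;=\; \sum_{t \in T} \mathbf{1}[\,t \in S \triangle pw\,],
\]
and then take expectation to obtain $\Exp[\dist_\Delta(S, pw)] = \sum_{t \in T} \Prob(t \in S \triangle pw)$.

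Next, I would evaluate each term by cases on whether $t \in S$. If $t \in S$, then $t$ contributes to the symmetric difference exactly when $t \notin pw$, so the term equals $1 - \Prob(t)$. If $t \notin S$, the term equals $\Prob(t)$. Therefore the tuple $t$ contributes
\[
\min\bigl(\Prob(t),\, 1 - \Prob(t)\bigr)
\]
to the minimum expected distance, and this minimum is achieved precisely by including $t$ in $S$ whenever $\Prob(t) > 1/2$ (and excluding $t$ when $\Prob(t) < 1/2$; ties are irrelevant). Since the choice for each tuple is independent of the choices for the others, the overall minimum is attained by the set $S^\star = \{\,t \in T : \Prob(t) > 1/2\,\}$.

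Finally, I would verify that $S^\star$ is a legitimate answer, i.e., it respects the key constraint that no two alternatives of the same tuple coexist. This follows from the key constraint of the and/xor tree: alternatives sharing a key have a $\Cvee$ node as their least common ancestor, so at most one of them can appear in any possible world, and hence their probabilities sum to at most $1$. Consequently at most one alternative per key can have probability strictly greater than $1/2$, so $S^\star$ is a well-defined deterministic world. I do not foresee a real obstacle here: the argument is entirely a linearity-of-expectation calculation, and the only subtle point is the consistency check with the key constraint, which is immediate from the tree semantics.
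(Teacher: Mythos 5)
Your proposal is correct and follows essentially the same route as the paper: linearity of expectation decomposes $\Exp[\dist_\Delta(S,pw)]$ into per-tuple contributions of $\Prob(\neg t)$ for $t\in S$ and $\Prob(t)$ for $t\notin S$, and the minimum is attained tuple-by-tuple by including exactly those with probability above $1/2$. Your closing check that the resulting set respects the key constraint is a harmless extra observation (the mean world is allowed to range over all feasible answers, so the paper does not need it here), but it does correctly anticipate the issue that matters for the median-world corollary.
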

\begin{proof}
Suppose $S$ is a fixed set of tuples and $\bar{S}=T-S$.
Let $\delta(p)=\left\{
        \begin{array}{ll}
             1, & \hbox{if $p=true$} \\
             0, & \hbox{if $p=false$}
        \end{array}
    \right.$ be the indicator function.
We write $E_{pw\in PW}[\dist_\Delta(S,pw)]$ as follows:
{\footnotesize
\begin{align*}
&\Exp[\dist_\Delta(S,pw)] = \Exp[\sum_{t\in S} \delta(t\notin pw) + \sum_{t\in \bar{S}} \delta(t\in pw)]  \\
& = \sum_{t\in S}  \Exp[\delta(t \notin pw)] + \sum_{t\in \bar{S}}  \Exp[\delta(t \in pw)] = \sum_{t\in S} \Prob(\neg t) +\sum_{t\in \bar{S}}\Prob(t)
 \end{align*}
}

\vspace{-5pt}
Thus, each tuple $t$ contributes $\Prob(\neg t)$ to the expected distance
if $t\in S$ and $\Prob(t)$ otherwise, and hence the minimum is achieved
by the set of tuples with probability $0.5$.
\qed
\end{proof}

Finding the consensus median world is somewhat trickier, with the main concern being that the world that
contains all tuples with probability $> 0.5$ may not be a possible world.


\begin{corollary}
If the correlation can be modeled by a probabilistic and/xor tree,
the median world is the set contains all tuples with probability
greater than $0.5$.
\end{corollary}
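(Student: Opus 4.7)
The strategy reduces the corollary to a realizability claim. The preceding theorem already shows that $S^{*}=\{t:\Prob(t)>1/2\}$ is the \emph{unconstrained} minimizer of the expected symmetric difference over all subsets of $T$; so if I can further verify that $S^{*}$ is itself a possible world of the probabilistic and/xor tree $\calT$, then $S^{*}$ must also minimize the expectation over the restricted set of possible worlds, i.e., it is the median world.

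First I would prove a structural lemma about a $\Cvee$ node $v$ with children $v_1,\ldots,v_\ell$ and edge probabilities $p_1,\ldots,p_\ell$: every leaf $t$ lying in subtree $\calT_{v_j}$ satisfies $\Prob(t)\le p_j$, because the event $\{t\in pw\}$ is contained in the event that branch $v_j$ is chosen at $v$. Since $\sum_j p_j\le 1$, at most one $p_j$ can exceed $1/2$, and hence the leaves of $S^{*}$ meet the leaves of at most one child of $v$.

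Using the lemma I would then carry out a structural induction on $\calT$ to produce a branch assignment realizing $S^{*}$. The leaf case is trivial. At a $\Cwedge$ node with children $v_1,\ldots,v_\ell$, the random set $S_v$ is deterministically the union of the $S_{v_i}$, so by the induction hypothesis we realize $S^{*}\cap\text{leaves}(\calT_{v_i})$ in each child and take their union. At a $\Cvee$ node $v$, the lemma implies $S^{*}\cap\text{leaves}(\calT_v)$ is either empty or contained in a single subtree $\calT_{v_j}$; in the latter case I select branch $v_j$ (which has positive probability $p_j>1/2$) and recurse, and in the former case I select the ``empty'' branch, which occurs with probability $1-\sum_i p_i$. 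Applied at the root, this exhibits $S^{*}$ as a possible world with strictly positive probability.

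The main obstacle is the corner case at a $\Cvee$ node where no child-subtree contains an $S^{*}$-leaf but $\sum_i p_i=1$, so the empty branch has zero probability and $S^{*}$ is literally not a possible world. Handling this cleanly requires either a mild genericity assumption that rules out this configuration (no $p_j$ equal to exactly $1/2$ with the $\Cvee$ node ``saturated''), or a local modification that replaces $S^{*}\cap\calT_v$ by the leaves of a positive-probability branch; under such conditions the induction goes through and exhibits $S^{*}$ as the required median world, completing the proof.
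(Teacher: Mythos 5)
Your proposal follows essentially the same route as the paper's (omitted) proof: reduce the corollary to the claim that $S^{*}=\{t:\Prob(t)>0.5\}$ is itself a possible world of positive probability, and prove that claim by structural induction on the tree, using the key observation that at a $\Cvee$ node the edge probabilities sum to at most $1$, so every leaf's marginal probability is bounded by its branch probability and at most one child subtree can contain leaves of $S^{*}$. The corner case you flag at a saturated $\Cvee$ node is genuine and is equally unaddressed in the paper's own induction (which tacitly assumes that when the realized set is nonempty some branch has probability exceeding $1/2$, and that the empty outcome is otherwise available): for a single $\Cvee$ node with three leaf children of probability $1/3$ each, $S^{*}=\emptyset$ is not a possible world and every possible world has strictly larger expected symmetric difference, so the corollary as literally stated fails there and needs the kind of caveat you describe.
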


\noindent{The} proof is by induction on the height of the tree, and is omitted for space constraints.
This however does not hold for arbitrary correlations, and it is easy to see that finding
a median world is NP-Hard even if result tuple probability computation is easy. We
show a reduction to MAX-2-SAT for a simple 2-relation query. Let the MAX-2-SAT instance consists
of $n$ literals, $x_1, \dots, x_n$, and $k$ clauses.
Consider a query $R \Join S$, where $S(x, b) = \{(x_1, 0), (x_1, 1), (x_2, 0), (x_2, 1), \dots \}$ contains two
mutually exlusive tuples each for $n$ literals; all tuples are equi-probable with probability 0.5.
$R(C, x, b)$ is a certain table, and contains two tuples for each clause: for the clause $c_1 = x_1 \vee \bar{x_2}$, it contains
tuples $(c_1, x_1, 1)$ and $(c_1, x_2, 0)$. The result of $\pi_{C} (R \Join S)$ contains one tuple for each clause, associated
with a probability of 0.75. So the median answer is the possible answer containing maximum number of tuples, which
corresponds to finding the assignment to $x_i$'s that maximizes the number of satisfied clauses.


\eat{
\begin{proof}
We just need to proof the world contains all tuples with probability greater than $0.5$
is a possible world with non-zero probability.
This can be easily proved by induction on the height of the tree.
The base case where the tree has height $1$ is trivial. Suppose it holds if the tree
has height at most $h$. Now, we prove it holds if the height of the tree $\calT$ is $h+1$.
Let $\calT_1,\calT_2,\ldots,\calT_\ell$ are the subtrees rooted at the children of
the root $r$ of $\calT$.
With a bit abuse of notation, we use $\calT$ to denote both the tree and the corresponding set of tuples.
From induction hypothesis, we know $\median(\calT_i)=\{t|t\in \calT_i \wedge \Prob(t)>0.5\}$.
We have two cases based on the type of the root $r$.
If $r$ is a $\Cwedge$ (and) node,
it is clear that $\cup_{i=1}^\ell \median(\calT_i)$ exist with probability $\prod_{i=1}^{\ell}\Prob(\median(T_i))$.
Suppose $r$ is a $\Cvee$ (or) node and $\calT_i$ is rooted at $v_i$.
If $p(r,v_i)\leq 0.5$, then $\median(\calT_i)=\emptyset$ since no tuple in $\calT_i$ has probability more than $0.5$.
Since there is at most one $i$ such that $p(r,v_i)>0.5$ and thus $\median(\calT_{i})\ne \emptyset$.
Therefore, $\median(\calT)=\median(\calT_i)$ is a world with non-zero probability. \qed
\end{proof}
}

\eat{
So, the key point is to compute $\Prob(t)$ for $t\in T$.
If the tuple correlations are given using a Markov network, then a polynomial-time algorithm can be
obtained assuming constant treewidth.
}


\subsection{Jaccard Distance}
The Jaccard distance between two sets $S_1,S_2$ is defined to be
$\dist_J(S_1,S_2)={|S_1\Delta S_2|\over |S_1\cup S_2|}$.
Jaccard distance always lies in $[0,1]$ and is a real metric, i.e, satisfies triangle
inequality. Next we present polynomial time algorithms for finding the mean and median worlds
for tuple independent databases, and median world for the BID model.

\begin{lemma}
\label{lm_jaccard_gene}
Given an and/xor tree, $\calT$ and a possible world for it, $W$ (corresponding to a set of leaves of $\calT$),
we can compute $\Exp[\dist(W, pw)]$ in polynomial time.
\end{lemma}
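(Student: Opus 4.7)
The plan is to reduce the computation to an expanded bivariate generating function of the form built in Theorem~\ref{thm_generating}. The first step is to rewrite the Jaccard distance so that it depends only on two integer statistics of $pw$ relative to the fixed set $W$. Let $a=|W\cap pw|$ and $b=|pw\setminus W|$. Then $|W\Delta pw|=(|W|-a)+b$ and $|W\cup pw|=|W|+b$, so
\[
\dist_J(W,pw) \;=\; \frac{(|W|-a)+b}{|W|+b},
\]
with the standard convention that the ratio is $0$ when $W=pw=\emptyset$ (i.e., when $|W|+b=0$). Consequently,
\[
\Exp[\dist_J(W,pw)] \;=\; \sum_{a=0}^{|W|}\sum_{b=0}^{|T\setminus W|} \Prob\bigl(|W\cap pw|=a,\,|pw\setminus W|=b\bigr)\,\frac{|W|-a+b}{|W|+b}.
\]

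The second step is to obtain the joint distribution $\Prob(|W\cap pw|=a,\,|pw\setminus W|=b)$ using the generating function framework. Following the mapping $s$ introduced before Theorem~\ref{thm_generating}, I would associate every leaf of $\calT$ that lies in $W$ with the variable $x$ and every leaf that does not lie in $W$ with the variable $y$, and then compute $\calF(x,y)$ by the recursive rules for $\Cvee$ and $\Cwedge$ nodes. By Theorem~\ref{thm_generating}, the coefficient of $x^a y^b$ in $\calF(x,y)$ is exactly the required joint probability. Since there are only two variables and each recursive step at a node of arity $\ell$ multiplies or adds polynomials of total degree at most $|T|$, the whole computation of $\calF(x,y)$ and the extraction of its coefficients runs in time polynomial in the size of $\calT$.

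The final step simply plugs the extracted coefficients into the sum above; there are at most $(|W|+1)(|T\setminus W|+1)=O(|T|^2)$ terms, so this too is polynomial. The only subtle point, rather than a real obstacle, is the denominator $|W|+b$: one must separate out the single term corresponding to $a=b=0$ when $|W|=0$ (and declare it $0$ by convention), after which every remaining term has a strictly positive denominator. Putting these three steps together yields a polynomial-time algorithm for $\Exp[\dist_J(W,pw)]$ on any and/xor tree, which is precisely the statement of the lemma.
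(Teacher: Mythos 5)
Your proposal is correct and follows essentially the same route as the paper: assign variable $x$ to leaves in $W$ and $y$ to leaves outside $W$, read off the joint distribution of $(|W\cap pw|,|pw\setminus W|)$ from the coefficients of $x^iy^j$ via Theorem~\ref{thm_generating}, and sum the coefficients weighted by $\frac{|W|-i+j}{|W|+j}$. Your treatment of the degenerate $W=pw=\emptyset$ term is a small additional care the paper omits, but the argument is otherwise identical.
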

\begin{proof}
A generating function $\calF_\calT$ is constructed with the variables associated with leaves as follows:
for $t\in W$ ($t\notin W$), the associated variable is $x$ ($y$).
For example, in a tuple independent database, the generating function is:
$$ \calF(x,y)=\prod_{t\in W}  \left(\Prob(\neg t) + \Prob(t) x\right) \prod_{t\notin W}\left( \Prob(\neg t)+ \Prob(t) y \right)$$
From Theorem \ref{thm_generating},
the coefficient $c_{i,j}$ of term $x^iy^j$ in generating function $\calF$
is equal to the total probability of the worlds such that the Jaccard distance between
those worlds and $W$ is exactly $\frac{|W|-i+j}{|W|+j}$.
Thus, the distance is $\sum_{i,j}c_{i,j}\frac{|W|-i+j}{|W|+j}$.
\end{proof}


\begin{lemma}
For tuple independent databases, if the mean world contains tuple $t_1$ but not tuple $t_2$, then $\Prob(t_1) \geq \Prob(t_2)$.
\end{lemma}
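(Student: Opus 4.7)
My plan is to establish this monotonicity via a direct exchange argument: if some mean world $W^*$ contained $t_1$ (with smaller probability) but not $t_2$ (with larger probability), I would show that the swapped world $W' = (W^* \setminus \{t_1\}) \cup \{t_2\}$ has strictly smaller expected Jaccard distance, contradicting optimality of $W^*$.

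To carry this out, let $W_0 = W^* \setminus \{t_1\}$, which contains neither $t_1$ nor $t_2$. For any possible world $pw$, I would decompose it as $pw = pw_0 \cup (pw \cap \{t_1, t_2\})$, where $pw_0$ is the restriction of $pw$ to tuples other than $t_1, t_2$. By tuple independence, the distribution of $pw_0$ is independent of the events $\{t_1 \in pw\}$ and $\{t_2 \in pw\}$, so I can condition on $pw_0$ and sum over the four cases for the presence of $t_1$ and $t_2$. Writing $a = |W_0 \cap pw_0|$, $b = |W_0 \setminus pw_0|$, $c = |pw_0 \setminus W_0|$, the Jaccard distances $\dist_J(W^*,pw)$ and $\dist_J(W',pw)$ agree when either both or neither of $t_1,t_2$ is present (by symmetry of the construction), so only the two mixed cases matter.

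In the two mixed cases, the contribution to $\Exp[\dist_J(W^*,pw) - \dist_J(W',pw) \mid pw_0]$ factors cleanly as
\[
\bigl(\Prob(t_1) - \Prob(t_2)\bigr)\cdot\left(\frac{b+c}{a+b+c+1} - \frac{b+c+2}{a+b+c+2}\right),
\]
after using $\Prob(t_1)(1-\Prob(t_2)) - \Prob(t_2)(1-\Prob(t_1)) = \Prob(t_1)-\Prob(t_2)$. A short algebraic check shows the second factor is strictly negative for every $pw_0$. Hence if $\Prob(t_1) < \Prob(t_2)$, the conditional difference is strictly positive for every $pw_0$, and taking expectation over $pw_0$ gives $\Exp[\dist_J(W^*, pw)] > \Exp[\dist_J(W', pw)]$, contradicting the assumption that $W^*$ is a mean world.

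The only delicate step is the case analysis: one must be careful that $t_1$ and $t_2$ contribute correctly to the numerator ($|W \Delta pw|$) and denominator ($|W \cup pw|$) of the Jaccard ratio in each of the four presence patterns, and that the independence of $pw_0$ from $t_1, t_2$ is used explicitly to factor the probabilities. Once the bookkeeping is right, the cancellation of the $\Prob(t_1)\Prob(t_2)$ and $(1-\Prob(t_1))(1-\Prob(t_2))$ terms and the sign of the remaining fraction are mechanical.
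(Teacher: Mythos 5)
Your proposal is correct, and at its core it is the same exchange argument the paper uses: assume the mean world contains the lower-probability tuple $t_1$ but not $t_2$, swap them, and show the expected Jaccard distance strictly decreases. The difference is purely in execution. The paper packages the computation in generating functions: it writes $\calF_1-\calF_2=\calF'(x,y)\,\bigl(\Prob(\neg t_1)\Prob(t_2)-\Prob(t_1)\Prob(\neg t_2)\bigr)(y-x)$, where $\calF'$ encodes the worlds restricted to $T\setminus\{t_1,t_2\}$, and then evaluates the resulting coefficient matrix against the matrix $m_{i,j}=\frac{k-i+j}{k+j}$, reducing everything to the sign of $\frac{k-i+j+1}{k+j+1}-\frac{k-i-1+j}{k+j}$. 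Your version conditions directly on $pw_0$ and enumerates the four presence patterns of $(t_1,t_2)$; your observation that the two symmetric cases cancel and the two mixed cases combine into $(\Prob(t_1)-\Prob(t_2))\bigl(\frac{b+c}{a+b+c+1}-\frac{b+c+2}{a+b+c+2}\bigr)$ is exactly the content of the paper's factor $p\cdot(y-x)$, and your sign check ($s-2n-2<0$ with $s=b+c\le n=a+b+c$) matches theirs. Both arguments use tuple independence in the same essential place --- to factor the distribution of the rest of the world away from $t_1,t_2$ --- which is why the lemma is stated only for tuple-independent databases. Your presentation is more elementary and arguably more transparent about where independence is used; the paper's generating-function formulation has the advantage of slotting into the machinery of Lemma \ref{lm_jaccard_gene} and Theorem \ref{thm_generating} that the rest of the section is built on.
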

\begin{proof}
Say $W_1$ is the mean world and the lemma is not true,
i.e, $\exists t_1\in W_1, t_2 \notin W_1$ s.t. $\Prob(t_1)<\Prob(t_2)$.
Let $W=W_1-\{t_1\}$, $W_2=W+\{t_2\}$
and ${W'}=T-W-\{t_1\}-\{t_2\}$. We will prove
$W_2$ has a smaller expected Jaccard distance, thus rendering contradiction.
Suppose $|W_1|=|W_2|=k$.
We let matrix $\bfM=[m_{i,j}]_{i,j}$ where $m_{i,j}={k-i+j\over k+j}$.
We construct generating functions as we did in Lemma \ref{lm_jaccard_gene}.
Suppose $\calF_1$ and $\calF_2$ are the generating functions for $W_1$ and $W_2$, respectively.
We write $||\mathbf{A}||=\sum_{i,j}a_{i,j}$ for any matrix $\mathbf{A}$
and let $\mathbf{A}\otimes\mathbf{B}$ the Hadamard product of $\mathbf{A}$ and $\mathbf{B}$
(take product entrywise).
We denote: \\[2pt]
$\calF'(x,y)=\prod_{t\in W}  \left(\Prob(\neg t) + \Prob(t) x\right) \prod_{t\in W'}\left( \Prob(\neg t)+ \Prob(t) y \right)$ \\[2pt]
We can easily see:
$$
\calF_1(x,y)=\calF'(x,y)\left(\Prob(\neg t_1) + \Prob(t_1) x\right)\left(\Prob(\neg t_2) + \Prob(t_2) y\right)
$$
$$
\calF_2(x,y)=\calF'(x,y)\left(\Prob(\neg t_1) + \Prob(t_1) y\right)\left(\Prob(\neg t_2) + \Prob(t_2) x\right)
$$
Then, taking the difference, we get $\bar\calF=\calF_1(x,y)-\calF_2(x,y)$ is equal to:
\begin{eqnarray}
\label{eq_diff}
\calF'(x,y)\left(\Prob(\neg t_1)\Prob(t_2)-\Prob(t_1)\Prob(\neg t_2)\right)(y-x)
\end{eqnarray}
Let $\bfC_\calF=[c_{i,j}]$ be the coefficient matrix of $\calF$ where $c_{i,j}$ is the coefficient of term $x^iy^j$.
Using the proof of Lemma \ref{lm_jaccard_gene}:
\begin{eqnarray*}
\Exp[\dist(W_1,pw)]-\Exp[\dist(W_2,pw)]&=&||\bfC_{\calF_1}\otimes \bfM||-||\bfC_{\calF_2}\otimes \bfM|| \\
&=&||\bfC_{\bar\calF}\otimes \bfM ||
\end{eqnarray*}
Let $c'_{i,j}$ and $\bar{c}_{i,j}$ be the coefficient of $x^iy_j$ in $\calF'$ and $\bar\calF$, respectively.
It is not hard to see $\bar{c}_{i,j}=(c'_{i,j-1}-c'_{i-1,j})p$ from (\ref{eq_diff})
where $p=\left(\Prob(\neg t_1)\Prob(t_2)-\Prob(t_1)\Prob(\neg t_2)\right)>0$.
Then we have
\begin{eqnarray*}
||\bfC_{\bar\calF}\otimes \bfM || &=& p\sum_{i,j}\left((c'_{i,j-1}-c'_{i-1,j})m_{i,j}\right) \\
&=& p\sum_{i,j}c'_{i,j}(m_{i,j+1}-m_{i+1,j})\\
&=& p\sum_{i,j}c'_{i,j}\left({k-i+j+1\over k+j+1}-{k-i-1+j\over k+j}\right)
\end{eqnarray*}
\eat{
In fact, we can see that:

\vspace{-10pt}
{\footnotesize
\begin{align*}
&\Exp[\dist(W_1,pw)]-\Exp[\dist(W_2,pw)] & \\
&= \sum_{pw\in PW}\Prob(pw)\left(\frac{W_1\Delta pw}{W_1\cup pw} -\frac{W_2\Delta pw}{W_2\cup pw}\right)\\
&= \sum_{{t_1\in pw}\atop{t_2\notin pw}}\Prob(pw)\left({W\Delta pw-1\over W\cup pw}-{W\Delta pw+1\over W\cup pw+1}\right)\\
&\mbox{\ \ \ \ \hspace{44pt}} + \sum_{{t_1\notin pw}\atop{t_2\in pw}}\Prob(pw)\left({W\Delta pw+1\over W\cup pw+1}-{W\Delta pw-1\over W\cup pw}\right)\\
&\mbox{Rewriting\ the\ summation to be over $i = |W_1 \Delta pw|$ and $j = |W_1 \cup pw|$:} \\
&= \sum_{i,j}\left(\frac{i}{j}-\frac{i+2}{j+1}\right) \Bigg{(}\Prob\left({t_1\in pw \wedge  |W\Delta pw|=i+3\wedge \atop  t_2\notin pw\wedge |W\cup pw|=j+1}\right)\\
&\mbox{\ \ \ \ \hspace{70pt}}- \Prob\left({t_1\notin pw\wedge  |W\Delta pw|=i-1\wedge \atop t_2\in pw \wedge |W\cup pw|=j-1}\right)\Bigg{)}\\
\end{align*}
}

\vspace{-25pt}
It is easy to see $\frac{i}{j}-\frac{i+2}{j+1}<0$ for any $i\geq 0,j>0$.
In tuple independent databases, 
the last term is simply $\Prob(t_1\in pw, t_2\notin pw)-\Prob(t_2\in pw, t_1\in pw)$
which is $<0$.}
Due to the fact that ${k-i+j+1\over k+j+1}-{k-i-1+j\over k+j}>0$ for any $i,j\geq 0$,
the proof is complete.
\qed
\end{proof}

The above two lemmas can be used to efficiently find the mean world for tuple-independent databases,
by sorting the tuples in the decreasing order by probabilities, and computing the expected distance
for every prefix of the sorted order.

A similar algorithm can be used to find the median world for the BID model (by only considering the
highest probability alternative for each tuple). Finding mean worlds or median worlds under more
general correlation models remains an open problem.


\eat{
Given this, the simple algorithm is to sort the tuples in the decreasing order by probabilities, and
incrementally grow the answer till we choose not to add a tuple.

Lets say the database contains $n$ tuples: $t_1, \dots, t_n$, with $p(t_i) \ge p(t_{i+1})$.
We need an algorithm for: {\em given the current answer: $t_1, \dots, t_k$, should we add $t_{k+1}$ to the answer ?}

Say: $A_k = \{t_1, \dots, t_k\}$. What is the ``score'' (expected distance to any other world) of this world ?

Consider the function:
$$ f(x,y)=\prod_{i=1}^k  \left(\Prob(t_i) + (1 - \Prob(t_i) \times x\right) $$
$$\prod_{i=k+1}^n\left(((1 - \Prob(t_i)) + \Prob(t_i) \times y) \right) $$

The coefficient of $x^ay^b$ is the total probability of the worlds such that the Jaccard distance between
those worlds and $A_k$ is exactly $\frac{a+b}{k+b}$ (symmetric difference divided by the union).
}



\section{Top-k Queries}
\label{sec:topk}

In this section, we consider \Topk\ queries in probabilistic databases.
Each tuple $t_i$ has a score $s(t_i)$.
In the tuple-level uncertainty model, $s(t_i)$ is fixed for each $t_i$, while
in the attribute-level uncertainty model, it is an random variable.
In the and/xor tree model, we assume that the attribute field is the score
(uncertain attributes that don't contribute to the score can be ignored).
We further assume no two tuples can take the same score for avoiding ties.
We use $r(t)$ to denote the random variable indicating the rank of $t$ and
$r_{pw}(t)$ to denote the rank of
$t$ in possible world $pw$.
If $t$ does not appear in the possible world $pw$,
then $r_{pw}(t)=\infty$.
So, $\Prob(r(t)>i)$ includes the probability that $t$'s rank
is larger than $i$ and that $t$ doesn't exist.
We say $t_1$ {\em ranks higher} than $t_2$ in possible world $pw$ if
$r_{pw}(t_1) < r_{pw}(t_2)$.

Finally, we use the symbol $\tau$ to denote
rankings, and $\tau^i$ to denote the restriction of the
$\Topk$ list $\tau$ to the first $i$ items.
We use $\tau(i)$ to denote the $i^{th}$ item in the list $\tau$ for positive integer $i$,
and $\tau(t)$ to denote the position of $t\in T$ in $\tau$.




\subsection{Distance between Two $\Topk$ Answers}
\eat{
\jiannote{
state somewhere the PRF function is really preferred.
The main concern is the actual quality and running time of previous algorithms for kendall distance.
}
}
Fagin et al.~\cite{fagin:sjdm} provide a comprehensive analysis of the problem of comparing
two \Topk\ lists. 
They present extensions of the Kendall's tau and
Spearman footrule metrics (defined on full rankings) to \Topk\ lists and propose
several other natural metrics, such as the intersection metric
and Goodman and Kruskal's gamma function.
In our paper, we consider three of the metrics discussed in that paper:
the symmetric difference metric, the intersection metric and
one particular extension to Spearman's footrule distance.
We briefly recall some definitions here. For more details and the relation between
different definitions, please refer to \cite{fagin:sjdm}.

Given two $\Topk$ lists, $\tau_1$ and $\tau_2$, the normalized symmetric difference
metric is defined as: \\[4pt]
\centerline{$
\dist_{\Delta}(\tau_1,\tau_2)={1\over 2\rmk}|\tau_1\Delta
\tau_2|={1\over 2\rmk}|(\tau_1\backslash \tau_2)\cup (\tau_2\backslash \tau_1)|.
$}

\vspace{2pt}
While $\dist_{\Delta}$ focuses only on the membership,
the intersection metric $\dist_I$ also takes the order of tuples into consideration.
It is defined to be:\\[4pt]
\centerline{$\dist_I(\tau_1,\tau_2)={1\over \rmk} \sum_{i=1}^{\rmk}
\dist_{\Delta}(\tau^i_1,\tau^i_2)
$}\\[4pt]
Both $\dist_{\Delta}$ and $\dist_I()$ values are always between $0$
and $1$.

\smallskip
The original Spearman's Footrule metric is defined as the $L_1$ distance between two permutations
$\sigma_1$ and $\sigma_2$. Formally,
$F(\sigma_1,\sigma_2)=\sum_{t\in T} |\sigma_1(t)-\sigma_2(t)|$.
Let $\ell$ be a integer greater than $\rmk$.
The {\em footrule distance with location parameter $\ell$},
denoted $F^{(\ell)}$ generalizes the original footrule metric.
It is obtained by placing all missing elements in each list at position $\ell$ and then computing
the usual footrule distance between them.
A natural choice of $\ell$ is $k+1$ and we denote $F^{(\ell+1)}$ by $\dist_F$.
It is also proven that $\dist_F$ is a real metric and a member of a big and important equivalence class
\footnote{
All distance functions in one equivalence class are bounded by each other within a constant factor.
This class includes several extensions of Spearman's footrule and Kendall's tau metrics.
}
\cite{fagin:sjdm}.

It is shown in \cite{fagin:sjdm} that:
\begin{eqnarray*}
&&\dist_F(\tau_1,\tau_2)=(\rmk+1)|\tau_1\Delta\tau_2|\\
&&+\sum_{t\in \tau_1\cap\tau_2}|\tau_1(t)-\tau_2(t)|
-\sum_{t\in \tau_1\setminus\tau_2} \tau_1(t) -\sum_{t\in \tau_2\setminus\tau_1}\tau_2(t).
\end{eqnarray*}


\label{sec_expgood}

\eat{All different definitions of $\Topk$ queries in probabilistic
database proposed are attempting to trade off tuples' existence
probability and score and find those with both higher probability
can score.
However, all prior works use some ``natural'' definitions
and it is hard to tell which one is better.
We pose the problem of finding the ``best'' $\Topk$ tuples in
probabilistic database as the following optimization problem.
Suppose $d()$ is the distance between two $\Topk$ lists. Let
$\tau_{pw}$ be the $\Topk$ answer of possible world $pw$. Our goal
is to determine $\Topk$ answer $\tau$ such that
the expected distance between $\tau$ and the actual (random) $\Topk$ answer $\tau_{pw}$, i.e,
$$
\Exp[d(\tau,\tau_{pw})]
$$
is minimized.}

Next we consider the problem of evaluating consensus answers for these distance metrics.

\subsection{Symmetric Difference and \PTK\ function}
\eat{
\PTK\ function is a special class of \PRFs\ function such that
the weight function $f(i)$ takes nonzero value only if $i\leq\rmk$.
First, we address the relation between \PTK\ function and the expected distance
minimization problem.
Basically, we show the $\Topk$ tuples returned by \PTK\ function
is a minimizer of a certain class of distance functions. In particular,
the Probabilistic Threshold $\Topk$ function minimizes
the expected normalized asymmetric difference $d_{\Delta}$.
}
In this section, we show how to find mean and median \Topk\ answers under symmetric difference metric in the
and/xor tree model.
The probabilistic threshold \Topk\ (\PTK) query~\cite{conf/sigmod/HuaPZL08} has been proposed
for evaluating ranking queries over probabilistic databases, and essentially returns
all tuples $t$ for which $\Prob(r(t)\leq \rmk)$ is greater than a given threshold.
If we set the threshold carefully so that the \PTK\ query returns $\rmk$ tuples,
we can show that the answer returned is the mean answer under symmetric difference metric.

\begin{theorem}
\label{thm_mindis_prfk}
If $\tau=\{\tau(1),\tau(2),\ldots,\tau(\rmk)\}$ is the set of
$\rmk$ tuples with the largest $\Prob(r(t)\leq \rmk)$,
then $\tau$ is the mean \Topk\ answer under metric $\dist_{\Delta}$, i.e., the answer minimizes $\Exp[\dist_{\Delta}(\tau,\tau_{pw})]$.
\end{theorem}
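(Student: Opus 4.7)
The plan is to reduce the expected symmetric-difference problem to a simple linear selection problem over the quantities $\Prob(r(t) \le \rmk)$. The key observation is that $\dist_\Delta$ depends on $\tau_{pw}$ only through the cardinality $|\tau \cap \tau_{pw}|$, so after applying linearity of expectation, the objective separates into a sum of per-tuple contributions.

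First I would rewrite the symmetric difference as $|\tau \Delta \tau_{pw}| = |\tau \setminus \tau_{pw}| + |\tau_{pw} \setminus \tau| = \rmk + |\tau_{pw}| - 2|\tau \cap \tau_{pw}|$, using $|\tau| = \rmk$ and allowing $|\tau_{pw}| \le \rmk$ when the possible world has fewer than $\rmk$ tuples (in which case $\tau_{pw}$ is the entire world, with the remaining positions ``missing''). Taking expectation gives
\[
2\rmk \cdot \Exp[\dist_\Delta(\tau,\tau_{pw})] = \rmk + \Exp[|\tau_{pw}|] - 2\Exp[|\tau \cap \tau_{pw}|].
\]
The first two terms on the right are independent of the choice of $\tau$, so minimizing the expected distance is equivalent to maximizing $\Exp[|\tau \cap \tau_{pw}|]$.

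Next I would write $|\tau \cap \tau_{pw}| = \sum_{t \in \tau} \mathbb{1}[t \in \tau_{pw}] = \sum_{t \in \tau} \mathbb{1}[r_{pw}(t) \le \rmk]$ and apply linearity of expectation to obtain
\[
\Exp[|\tau \cap \tau_{pw}|] = \sum_{t \in \tau} \Prob(r(t) \le \rmk).
\]
Since $\tau$ must be a size-$\rmk$ subset of $T$, the maximum is achieved exactly by choosing the $\rmk$ tuples with the largest values of $\Prob(r(t) \le \rmk)$, which is the claimed answer.

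There is essentially no hard step here; the main subtlety is bookkeeping in the case where a possible world has fewer than $\rmk$ tuples, so that $|\tau_{pw}|$ varies with $pw$. Once one confirms that $\Exp[|\tau_{pw}|]$ does not depend on $\tau$, the argument collapses to a one-line selection. Note also that the probabilities $\Prob(r(t) \le \rmk) = \sum_{i=1}^{\rmk} \Prob(r(t) = i)$ can be computed in polynomial time in the and/xor tree model via the generating-function technique of Theorem~\ref{thm_generating}, so the mean answer can actually be produced efficiently.
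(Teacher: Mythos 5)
Your proof is correct and follows essentially the same route as the paper's: both reduce $\Exp[\dist_\Delta(\tau,\tau_{pw})]$ by linearity of expectation to the form $\rmk + \sum_{t\in T}\Prob(r(t)\le\rmk) - 2\sum_{t\in\tau}\Prob(r(t)\le\rmk)$ (your $\Exp[|\tau_{pw}|]$ is exactly the paper's $\sum_{t\in T}\Prob(r(t)\le\rmk)$ term) and observe that only the last sum depends on $\tau$. The only cosmetic difference is that you pass through the cardinality identity $|\tau\Delta\tau_{pw}|=\rmk+|\tau_{pw}|-2|\tau\cap\tau_{pw}|$ while the paper sums indicator variables directly per tuple; the bookkeeping for worlds with fewer than $\rmk$ tuples is handled correctly in both.
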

\begin{proof}
Suppose $\tau$ is fixed.
We write $\Exp[\dist_{\Delta}(\tau,\tau_{pw})]$ as follows:

\vspace{-15pt}
{\footnotesize
\begin{align*}
\Exp[\dist_{\Delta}(\tau,\tau_{pw})]&= \Exp[\sum_{t\in T} \delta(t\in \tau \wedge t\notin \tau_{pw})+\delta(t\in \tau_{pw} \wedge t\notin \tau)]  \\
&= \sum_{t\in T\setminus \tau} \Exp[\delta(t\in \tau_{pw})] +\sum_{t\in \tau}  \Exp[\delta(t\notin \tau_{pw})] \\
&= \sum_{t\in T\setminus\tau} \Prob(r(t)\leq \rmk) +\sum_{t\in \tau} \Prob(r(t)> \rmk) \\
&= \rmk + \sum_{t\in T} \Prob(r(t)\leq\rmk) - 2\sum_{t\in \tau} \Prob(r(t)\leq \rmk)
\end{align*}
}

\vspace{-10pt}
The first two terms are invariant with respect to $\tau$.
Therefore, it is clear that the set of $\rmk$ tuples with the largest $\Prob(r(t)\leq \rmk)$
minimizes the expectation.
\qed
\end{proof}

\noindent{T}o find a median answer, we essentially need to find
the \Topk\ answer $\tau$ of some possible world such that $\sum_{t\in \tau} \Prob(r(t)\leq \rmk)$ is maximum.
Next we show how to do this given an and/xor tree in polynomial time.

We write $P(t)=\Prob(r(t)\leq \rmk)$ for ease of notation.
We use dynamic programming over the tree structure.
For each possible attribute value $a\in A$,
let $\calT^a$ be the tree which contains all leaves with attribute value at least $a$.
We recursively compute the set of tuples $pw^a_{v,i}$, which maximizes the value $\sum_{t\in pw^a_{v,i}}P(t)$ among all
possible worlds generated by the subtree $\calT^a_v$ rooted at $v$ and of size $i$,
for each node $v$ in $\calT^a$ and $1\leq i\leq \rmk$.
We compute this for all different $a$ values, and
the optimal solution can be chosen to be $\min_a(pw^a_{r,\rmk})$.

Suppose $v_1,v_2,\ldots,v_l$ are $v$'s children.
The recursion formula is:
\begin{list}{$\bullet$}{\leftmargin 0.15in \topsep 2pt \itemsep 1pt}
\item
If $v$ is a $\Cvee$ node,
$pw^a_{v,i}=\arg\max_{pw\in PW(\calT^a_{v_i})}\sum_{t\in pw}P(t)$.
\item
If $v$ is a $\Cwedge$ node,
$pw^a_{v,i}=\cup_j pw_j$ such that $\sum_j |pw_j|=i, pw_j\in PW(\calT^a_{v_j})$ and
$\sum_{t\in \cup_j pw_j}P(t)$ is maximized.
\end{list}
In the latter case, the maximum value can be computed by dynamic programming again as follows.
Let $pw^a_{[v_1,\ldots v_h],i}=\cup_{j=1}^h pw_j$ such that $\sum_{j=1}^h |pw_j|=i,pw_j\in PW(\calT^a_{v_j})$ and
$\sum_{t\in \cup_{j=1}^h pw_j}P(t)$ is maximized.
It can be computed recursive by seeing
$pw^a_{[v_1,\ldots v_h],i}=pw^a_{[v_1,\ldots v_{h-1}],p}\cup pw^a_{v_h,q}$ for $p,q$ such that
$p+q=i$ and
$\sum_{t\in pw^a_{[v_1,\ldots v_{h-1}],p}\cup pw^a_{v_h,q}}P(t)$ is maximized.
Then, it is easy to see $pw^a(v,i)$ is simply $pw^a([v_1,\ldots,v_l],i)$.

\begin{theorem}
The median $\Topk$ answer under symmetric difference metric
can be found in polynomial time for a probabilistic and/xor tree.
\end{theorem}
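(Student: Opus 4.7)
The plan is to reduce finding the median $\Topk$ answer to a combinatorial optimization over leaf subsets and then solve that optimization by a dynamic program on the and/xor tree $\calT$, following the sketch already given.

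First, by the same calculation as in the preceding (mean) theorem, for any candidate top-$\rmk$ list $\tau$ (viewed as a set for the purpose of $\dist_\Delta$),
\[
\Exp[\dist_\Delta(\tau,\tau_{pw})] \;=\; \rmk + \sum_{t\in T} P(t) - 2\sum_{t\in\tau} P(t),
\]
where $P(t)=\Prob(r(t)\le\rmk)$. Each $P(t)$ can be computed in polynomial time via the generating-function framework of Section~\ref{sec:and-xor} (take $y$ for the alternative in question, $x$ for each competing leaf with strictly larger score, and $1$ elsewhere, and sum the relevant coefficients over positions $1,\dots,\rmk$). Restricted to median candidates, the problem therefore reduces to: find a size-$\rmk$ set $\tau$ that is the top-$\rmk$ of some possible world of positive probability and that maximizes $\sum_{t\in\tau}P(t)$.

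Next, parameterize the search by the score $a$ of the $\rmk$-th (lowest) element of $\tau$. For each of the polynomially many distinct score values $a$, let $\calT^a$ be the restriction of $\calT$ obtained by deleting every leaf whose score is strictly below $a$, and search for a size-$\rmk$ subset of leaves of $\calT^a$ that arises with positive probability under the and/xor generative process on $\calT^a$. I claim this parameterization is lossless: conversely to the obvious direction, every such subset $S$ extends to a positive-probability possible world of the full tree $\calT$ in which $S$ is exactly the top-$\rmk$. The extension takes the empty outcome at every $\Cvee$ node not already committed by the restricted process on $\calT^a$, and lets each $\Cwedge$ node absorb whatever attribute-$<a$ descendants the full generative process produces; these additional tuples all have score below $a$, so they cannot displace the $\rmk$ chosen tuples from the top-$\rmk$.

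The restricted problem is then solved by the tree DP sketched above. For each node $v$ of $\calT^a$ and each $1\le i\le\rmk$, $pw^a_{v,i}$ is the highest-$P$-weight size-$i$ subset of leaves of $\calT^a_v$ that arises with positive probability from the restricted subset process on $\calT^a_v$. At a $\Cvee$ node only one child's subtree contributes (or none), so $pw^a_{v,i}$ is the best of the $pw^a_{v_j,i}$; at a $\Cwedge$ node the subset is the disjoint union of the children's contributions, computed by the inner knapsack-style DP $pw^a_{[v_1,\ldots,v_h],i}$. The final answer is the $pw^a_{r,\rmk}$ that maximizes $\sum_{t\in pw^a_{r,\rmk}} P(t)$ across the choices of threshold $a$ and the root $r$.

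The main obstacle is verifying that the DP's feasible region coincides exactly with the set of top-$\rmk$ answers of positive-probability possible worlds; this is precisely what the score-threshold parameterization plus the extension argument above deliver, no more and no less. Correctness of the DP recursions themselves is a straightforward induction on tree height from the inductive definition of $S_v$, and the running time is polynomial: at most $|T|$ thresholds, each requiring $O(\rmk^2)$ inner work per $\Cwedge$-child pair and $O(\rmk)$ work per $\Cvee$-child pair.
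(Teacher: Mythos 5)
Your proposal is correct and follows essentially the same route as the paper: reduce to maximizing $\sum_{t\in\tau}P(t)$ with $P(t)=\Prob(r(t)\le \rmk)$, parameterize by the score threshold $a$ of the $\rmk$-th element, and run the $\Cvee$/$\Cwedge$ tree dynamic program with an inner knapsack at $\Cwedge$ nodes. The one thing you add beyond the paper's sketch is the explicit check that every positive-probability size-$\rmk$ outcome of $\calT^a$ extends to a positive-probability world of $\calT$ with that set as its exact top-$\rmk$ (and conversely), which the paper leaves implicit; just note that at an uncommitted $\Cvee$ node the full tree may be forced to select a pruned (score $<a$) child rather than the literal empty outcome, which is harmless since such tuples cannot enter the top-$\rmk$.
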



\subsection{Intersection Metric}

Note that the intersection metric $\dist_I$ is a linear combination of the
normalized asymmetric difference metric $\dist_\Delta$.
Using a similar approach used in the proof of Theorem \ref{thm_mindis_prfk},
we can show that:
\begin{eqnarray*}
    \Exp[\dist_I(\tau,\tau_{pw})] = {1\over \rmk} \sum_{i=1}^{\rmk} \Exp[\dist_{\Delta}(\tau^i,\tau^i_{pw})]  \mbox{\ \hspace{1.2in}}\\
= {1\over \rmk} \sum_{i=1}^{\rmk} {1\over i}\left( \rmk + \sum_{t\in T} \Prob(r(t)\leq\rmk) - 2\sum_{t\in \tau^i} \Prob(r(t)\leq i)\right) \\
\end{eqnarray*}

Thus we need to find $\tau$ which maximizes the last term,
$A(\tau)=\sum_{i=1}^{\rmk} \left({1\over i} \sum_{t\in \tau^i} \Prob(r(t)\leq i)\right)$.
We first rewrite the objective as follows, using the indicator ($\delta$) function:
\begin{eqnarray*}
A(\tau)&=&\sum_{i=1}^{\rmk} \left({1\over i} \sum_{t\in T} \Prob(r(t)\leq i))\delta(t\in \tau^i)\right) \\
&=& \sum_{t\in T}\left( \sum_{i=1}^{\rmk} {1\over i} \Prob(r(t)\leq i)\sum_{j=1}^i\delta(t=\tau(j)) \right) \\
&=&\sum_{t\in T}\sum_{j=1}^{\rmk} \left(\delta(t=\tau(j)) \sum_{i=j}^{\rmk} {1\over i}\Prob(r(t)\leq i) \right)
\end{eqnarray*}
The last equality holds since $\sum_{i=1}^{\rmk}\sum_{j=1}^i a_{ij}=\sum_{j=1}^{\rmk}\sum_{i=j}^k a_{ij}$.

\eat{
The problem can be solved in polynomial time
by reducing it to maximum weight perfect matching problem as follows.
create a bipartite graph $G(U;V,E)$ with $U=\{u_1,u_2,\ldots,u_n\}$
and $V=\{v_1,v_2,\ldots,v_{\rmk},w_1,w_2,\ldots,w_{n-\rmk}\}$.
Basically, node $u_t$ corresponds to tuple $t$ in $T$, $v_i$ corresponds to
$\tau(i)$, i.e, the $i$th tuple in the $\Topk$ answer $\tau$, and $w_i$s are dummy nodes which
ensure the existence of a perfect matching.
Define the weight of edge $(u_t,v_j)$ to be
the contribution of tuple $t$ to the objective $A(\tau)$ if $t=\tau(j)$, i.e,
$\sum_{i=j}^{\rmk}\left({1\over i}\Prob(r(t)\leq i)\right)$.
All other edges have weight $0$.
Then, we compute a maximum weight perfect matching $M$ on $G$.
Edge $(u_t,v_i)$ in $M$ indicate that $\tau(i)=t$, i.e, $t$ will be the Top-$i$ tuple.

\amolnote{I believe the above is called an {\em assignment problem}, with tuples acting as agents and each of the
top-k spots as the tasks (\url{http://en.wikipedia.org/wiki/Assignment_problem})}.
}
\smallskip
The optimization task can thus be written as an {\em assignment problem},
with each tuple $t$ acting as an agent and each of the \Topk\ positions $j$
as a task. Assigning task $j$ to agent $t$ gains a profit of $\sum_{i=j}^{\rmk} {1\over i}\Prob(r(t)\leq i)$ and the
goal is to find an assignment such that each task is assigned to at most one agent, and the profit is maximized.
The best known algorithm for computing the optimal assignment runs in $O(n\rmk \sqrt n)$ time, via computing a maximum
weight matching on bipartite graph~\cite{conf/focs/matching80}.

\smallskip
\noindent{\bf Approximating the Intersection Metric:}
We define the following ranking function, where $H_k$ denotes the $k^{th}$ Harmonic number:
$$\rank_H(t)=\sum_{i=1}^{\rmk}(H_{\rmk}-H_{i-1})\Prob(r(t)=i)=\sum_{i=1}^{\rmk}{\Prob(r(t)\leq i)\over i}.$$
This is a special case of the parameterized ranking function proposed
in~\cite{tech} and can be computed in $O(n\rmk\log^2 n)$ time for all tuples in the and/xor tree.
We claim that the \Topk\ answer $\tau_H$ returned by $\rank_H$ function, i.e., the $\rmk$ tuples with the highest $\rank_H$ values,
is a good approximation of the mean answer with respect to the intersection metric
by arguing that $\tau_H=\{t_1,t_2,\ldots,t_{\rmk}\}$ is actually an approximated maximizer of $A(\tau)$.
Indeed, we prove the fact that $A(\tau_H)\geq {1\over H_\rmk}A(\tau^*)$ where
$\tau^*$ is the optimal mean \Topk\ answer.

\begin{figure*}[t]
    {\footnotesize
\begin{eqnarray*}
    \Exp[F^*(\tau,\tau_{pw})] &=& \Exp
    \left[
        (\rmk+1)|\tau\Delta\tau_{pw}|+\sum_{t\in \tau\cap\tau_{pw}}|\tau(t)-\tau_{pw}(t)|
        -\sum_{t\in \tau\setminus\tau_{pw}} \tau(t) -\sum_{t\in \tau_{pw}\setminus\tau}\tau_{pw}(t)
    \right] \\
    &=& (\rmk+1)\Exp[|\tau\Delta\tau_{pw}|]
        +\sum_{t\in T}\Exp\left[\delta(t\in \tau\cap\tau_{pw}) |\tau(t)-\tau_{pw}(t)|\right]
      -\sum_{t\in T}\Exp\left[\delta(t\in \tau\setminus\tau_{pw})\tau(t)\right]
        -\Exp\left[\sum_{t\in \tau_{pw}\setminus\tau}\tau_{pw}(t)\right] \\
    &=& (\rmk+1)\Exp[|\tau\Delta\tau_{pw}|]
        +\sum_{t\in T}\sum_{i=1}^{\rmk}\sum_{j=1}^{\rmk}\Exp\left[
            \delta(t\in \tau\cap\tau_{pw})\delta(t=\tau_{pw}(i))\delta(t=\tau(j))|i-j|
            \right] \\
    & & -\sum_{t\in T}\sum_{i=1}^{\rmk}\Exp\left[ \delta(t\in \tau\setminus\tau_{pw})\delta(t=\tau(i))i\right]
        -\sum_{t\in T\setminus\tau}\rank_2(t)\\
    &=& (\rmk+1) \Exp[|\tau\Delta\tau_{pw}|]
        +\sum_{t\in T}\sum_{i=1}^{\rmk}\left(\delta(t=\tau(i))\sum_{j=1}^{\rmk}\Prob(r(t)=j)|i-j|\right)
     -\sum_{t\in T}\sum_{i=1}^{\rmk}\left(\delta(t=\tau(i))i\Prob(r(t)>\rmk)\right)
        -\sum_{t\in T\setminus\tau}\rank_2(t) \\
    &=& (\rmk+1)(\rmk+\sum_{t\in T}\rank_1(t)-2\sum_{t\in \tau}\rank_1(t))
        +\sum_{t\in T}\sum_{i=1}^{\rmk}\delta(t=\tau(i))\rank_3(t,i)-\sum_{t\in T\setminus\tau}\rank_2(t)\\
    &=& (\rmk+1)\rmk+ \sum_{t\in T}\left((\rmk+1)\rank_1(t)-\rank_2(t)\right)
        +\sum_{t\in T}\sum_{i=1}^{\rmk}\delta(t=\tau(i))(\rank_3(t,i)+\rank_2(t)-2(\rmk+1)\rank_1(t))
\end{eqnarray*}
}
\vspace{-10pt}
\caption{Derivation for Spearman's Footrule Distance}
\vspace{-5pt}
\label{fig:spearmans}
\end{figure*}

Let $B(\tau)=\sum_{t\in \tau}\rank_H(t)$ for any \Topk\ answer $\tau$.
It is easy to see $A(\tau^*)\leq B(\tau^*)\leq B(\tau_H)$ since $\tau_H$ maximizes the $B()$ function.
Then, we can get:
\begin{eqnarray*}
\label{eq_AB}
A(\tau_H)&=&    \sum_{j=1}^{\rmk} \sum_{i=j}^{\rmk} {1\over i}\Prob(r(t_j)\leq i) \\
         &\geq& \sum_{j=1}^{\rmk} ({H_\rmk-H_{j-1}\over H_k})\sum_{i=1}^{\rmk} {1\over i}\Prob(r(t_j)\leq i) \\
         &=& \sum_{j=1}^{\rmk} ({H_\rmk-H_{j-1}\over H_k})\rank_H(t_j) \\
         &\geq& {1\over \rmk}\sum_{i=1}^\rmk ({H_\rmk-H_{i-1}\over H_k}) \sum_{i=1}^\rmk \rank_H(t_i) \\
         &=& {1\over H_\rmk} B(\tau_H) \geq {1\over H_\rmk}A(\tau^*).
\end{eqnarray*}
The second inequality holds because for non-decreasing sequences $a_i(1\leq i\leq n)$ and $c_i(1\leq i\leq n)$, \\
\centerline{$\sum_{i=1}^n a_ic_i\geq {1\over n}(\sum_{i=1}^n a_i)(\sum_{i=1}^n c_i)$}
\subsection{Spearman's Footrule}

\noindent{For} a \Topk\ answer $\tau=\{\tau(1),\tau(2),\ldots,\tau(\rmk)\}$, we define:
\begin{list}{$\bullet$}{\leftmargin 0.25in \topsep 2pt \itemsep 1pt}
    \item $\rank_1(t)=\sum_{i=1}^{\rmk}\Prob(r(t=i))$
    \item $\rank_2(t)=\sum_{i=1}^{\rmk}\Prob(r(t=i))\cdot i$
    \item $\rank_3(t,i)=\sum_{j=1}^{\rmk}\Prob(r(t)=j))|i-j|+i\Prob(r(t)>\rmk)$.
\end{list}
\eat{
\amolnote{Regarding the last one: Does $\Prob(r(t)>\rmk)$ include the probability that $t$ is not in the possible world? In other
words, should this be $\Prob(r(t) \not\le \rmk)$ instead of $\Prob(r(t) > \rmk)$?}
}
It is easy to see $\rank_1(t),\rank_2(t),\rank_3(t)$ can be computed in polynomial time
for a probabilistic and/xor tree using our generating functions method.

A careful and non-trivial rewriting of $E_{pw\in PW}[F^*(\tau,\tau_{pw})]$ shows that it also
has the form (Figure \ref{fig:spearmans}):
\[ E_{pw\in PW}[F^*(\tau,\tau_{pw})] = C + \sum_{t\in T} \sum_{i=1}^{\rmk} \delta(t=\tau(i))f(t, i) \]
where $C$ is a constant independent of $\tau$, and $f(t,i)$ is a function of $t$ and $i$, which is polynomially computable.
Figure \ref{fig:spearmans} shows the exact derivation.

Thus, we only need to minimize the second term, which can be modeled as the assignment problem
and can be solved in polynomial time.

\subsection{Kendall's Tau Distance}

Then {\em Kendall's tau} distance (also called Kemeny distance) $\dist_K$ between two \Topk\ lists
$\tau_1$ and $\tau_2$ is
defined to be the number of unordered pairs $(t_i,t_j)$ such that
that the order of $i$ and $j$ disagree in any full rankings extended from $\tau_1$ and $\tau_2$, respectively.
It is shown that $\dist_F$ and $\dist_K$ and a few other generalizations of Spearman's footrule and Kendall's tau
metrics form a big equivalence class, i.e., they are within a constant factor of each other ~\cite{fagin:sjdm}.
Therefore, the optimal solution for $\dist_F$ implies constant approximations for all metrics in
this class (the constant for $\dist_K$ is $2$).

However, we can also easily obtain a $3/2$-approximation for $\dist_K$ by extending the $3/2$-approximation
for partial rank aggregation problem due to Ailon~\cite{conf/soda/ailon07}.
The only information used in their algorithm is the proportion of lists where $t_i$ is ranked higher than $t_j$
for all $i,j$. In our case, this corresponds to $\Prob(r(t_i)<r(t_j))$. This can be easily computed in
polynomial time using the generating functions method.

We also note that the problem of optimally computing the mean answer is NP-hard for probabilistic and/xor trees.
This follows from the fact that probabilistic and/xor trees can simulate arbitrary possible worlds, and
previous work has shown that aggregating even 4 rankings under this distance metric is NP-Hard~\cite{conf/www/rankaggregation}.

\section{Other Types of Queries}
\label{sec:other types of queries}
We briefly extend the notion of consensus answers to two other types of queries and present
some initial results.

\subsection{Aggregate Queries}
\label{sec:aggregates}

\noindent{Consider} a query of the type: \\[2pt]
\centerline{\em select groupname, count(*) from R group by groupname} \\[2pt]
Suppose there are $m$ potential groups (indexed by groupname) and $n$ independent
tuples with attribute uncertainty. The probabilistic database can be specified by the matrix
$\bfP=[p_{i,j}]_{n\times m}$ where $p_{i,j}$ is the probability that
tuple $i$ takes groupname $j$ and $\sum_{j=1}^m p_{i,j}=1$ for any $1\leq i\leq n$.
A query result (on a deterministic relation) is a $m$-dimensional vector $\mathbf{r}$ where the $i^{th}$ entry is
the number of tuples having groupname $i$.
The natural distance metric to use is the squared vector distance.

Computing the mean answer is easy in this case, because of linearity of expectation: we simply
take the mean for each aggregate separately, i.e.,
$\bfbr = \mathbf{1}\bfP$ where $\mathbf{1}=(1,1,\ldots,1)$.
We note the mean answer minimizes the expected squared vector distance to any possible answer.

The median world requires that the returned answer be a possible answer.
It is not clear how to solve this problem optimally in polynomial time.
To enumerate all worlds is obviously not computationally feasible.
Rounding entries of $\bfbr$ to the nearest integers may not result in a possible answer.

Next we present a polynomial time algorithm to find a closest possible answer
to the mean world $\bfbr$.
This yields a $4$-approximation for finding the median answer.
We can model the problem as follows:
Consider the bipartite graph $B(U,V,E)$ where each node in $U$ is a tuple, each node in $V$
is a groupname, and an edge $(u,v), u\in U, v\in V$ indicates that tuple $u$ takes groupname $v$ with non-zero
probability.
We call a subgraph $G'$ such that $deg_{G'}(u)=1$ for all $u\in U$
and $deg_{G'}(v)=\bfr[v]$,  an {\em $\bfr$-matching} of $B$ for some $m$-dimensional integral vector $\bfr$.
Given this, our objective is to find an $\bfr$-matching of $B$
such that $||\bfr-\bfbr||_2$ is minimized.
Before presenting the main algorithm, we need the following lemma.
\eat{
\begin{lemma}
\label{lm_floor}
Suppose $\bftr$ is the vector obtained by rounding down each entry of $\bfbr$ to the nearest integer.
There is a subset of $\bftr\cdot \mathbf{1}$ tuples that generates $\bftr$ with non-zero probability.
\end{lemma}
\begin{proof}
By Hall's Theorem, we only need to prove that $\sum_{v\in S}\bftr[v]\leq |N_B(S)|$ for any subset $S\in V$.
This is true simply because
$$ \sum_{v\in S}\bftr[v]=\sum_{v\in S}\sum_{u\in U}\bfP[u,v] $$
$$=\sum_{u\in N_B(S)}\sum_{v\in S}\bfP[u,v]\leq \sum_{u\in N_B(S)} 1  \leq |N_B(S)|.  $$
\qed
\end{proof}
}
\begin{lemma}
\label{lm_ceilingfloor}
The possible world $\bfr^*$ that is closest to $\bfbr$ is of the following form:
$\bfr^*[i]$ is either $\lfloor\bfbr[i]\rfloor$ or $\lceil\bfbr[i]\rceil$ for each $1\leq i\leq m$.
\end{lemma}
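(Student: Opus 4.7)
The plan is to prove the lemma by an exchange argument. Suppose, for contradiction, that the optimal $\bfr^*$ has a coordinate $j$ with $\bfr^*[j] \notin \{\lfloor \bfbr[j]\rfloor, \lceil \bfbr[j]\rceil\}$; by symmetry it suffices to treat the case $\bfr^*[j] \leq \lfloor \bfbr[j]\rfloor - 1$, so the signed deficit $d[i] := \bfr^*[i] - \bfbr[i]$ satisfies $d[j] \leq -1$. Fix any valid assignment $\sigma^*$ realizing $\bfr^*$, and define the reassignment digraph $G$ on $V$ by putting an arc $a \to b$ whenever some tuple $u$ with $\sigma^*(u) = a$ has $p_{u,b} > 0$. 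The key observation is that a directed path $k \to v_1 \to \dots \to v_{l-1} \to j$ in $G$ can be realized by a sequence of single-tuple reassignments (handle the arcs in order, reassigning the witnessing tuple for each) that transfers exactly one unit of count from $k$ to $j$ while leaving every other coordinate unchanged; the resulting vector is therefore another feasible possible world.

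The crucial step is to exhibit some $k$ with $d[k] > 0$ inside $R$, the set of vertices that can reach $j$ in $G$. Observe that $R$ is backward-closed: if $w \in R$ and $v \to w$, then $v \in R$. Consequently, any tuple $u$ with $\sigma^*(u) \notin R$ has all of its feasible groups inside $V \setminus R$, so every tuple in $U^R := \{u : p_{u,v} > 0 \text{ for some } v \in R\}$ must already be assigned into $R$ under $\sigma^*$. This yields
\[ \sum_{v \in R} \bfr^*[v] \;\geq\; |U^R| \;\geq\; \sum_{u \in U^R} \sum_{v \in R} p_{u,v} \;=\; \sum_{v \in R} \bfbr[v], \]
i.e., $\sum_{v \in R} d[v] \geq 0$. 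If every $v \in R$ had $d[v] \leq 0$, this would force $d[v] = 0$ termwise and in particular $d[j] = 0$, contradicting $d[j] \leq -1$. Hence some $k \in R$ has $d[k] > 0$.

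Pushing one unit of count along the $k \to \dots \to j$ path produces a feasible $\bfr'$ that agrees with $\bfr^*$ outside $\{j,k\}$ and satisfies $\bfr'[j] = \bfr^*[j]+1$, $\bfr'[k] = \bfr^*[k]-1$. Direct expansion gives
\[ \|\bfr' - \bfbr\|_2^2 - \|\bfr^* - \bfbr\|_2^2 \;=\; 2\bigl(d[j] - d[k] + 1\bigr) \;<\; 0, \]
since $d[j] \leq -1$ and $d[k] > 0$, contradicting the optimality of $\bfr^*$. The symmetric case $\bfr^*[j] \geq \lceil \bfbr[j]\rceil + 1$ is handled by replacing $R$ with the forward-reachable set from $j$ and swapping the roles of $j$ and $k$. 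The main obstacle I expect is the Hall-type counting in the second paragraph — certifying via the backward-closure of $R$ that a feasible shift-source actually exists inside the reachability closure, rather than merely arguing at the level of marginals.
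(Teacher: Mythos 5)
Your proof is correct and is essentially the paper's own argument in different notation: your reassignment digraph on group vertices is the contraction of the paper's alternating-path structure on the bipartite tuple--group graph (the paper phrases the one-unit exchange as $M^*\oplus P$ for an alternating path $P$), and your backward-closure count $\sum_{v\in R}\bfr^*[v]\ge\sum_{v\in R}\bfbr[v]$ is exactly the paper's Hall-type count over the odd levels of its BFS alternating tree. The concluding computation showing the squared distance strictly decreases after shifting one unit between a surplus and a deficit coordinate is identical in both.
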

\begin{proof}
Let $M^*$ be the corresponding $\bfr^*$-matching.
Suppose the lemma is not true, and there exists $i$ such that
$|\bfr^*[i]-\bfbr[i]|>1$.
W.l.o.g, we assume $\bfr^*[i]>\bfbr[i]$. The other case can be proved the same way.
Consider the connected component $K=\{U',V',E(U',V')\}$ containing $i$.
We claim that there exists $j\in V'$ such that $\bfr^*[j]<\bfbr[j]$ and
there is an alternating path $P$ with respect to $M^*$
connecting $i$ and $j$.
Therefore, $M'=M^*\oplus P$ is also a valid matching. Suppose $M'$ is a $\bfr'$-matching. But:

\vspace{-15pt}
{\footnotesize
\begin{eqnarray*}
||\bfr'-\bfbr||_2^2 &=& \sum_{v=1}^m (\bfr'[v]-\bfbr[v])^2 \\
&=& \sum_{v=1}^m (\bfr^*[v]-\bfbr[v])^2-(\bfr^*[i]-\bfbr[i])^2- \\
&& (\bfr^*[j]-\bfbr[j])^2 +(\bfr'[i]-\bfbr[i])^2+(\bfr'[j]-\bfbr[j])^2 \\
&=& ||\bfr^*-\bfbr||_2^2-(\bfr^*[i]-\bfbr[i])^2-(\bfr^*[j]-\bfbr[j])^2 \\
&& +(\bfr^*[i]-1-\bfbr[i])^2+(\bfr^*[j]+1-\bfbr[j])^2 \\
&=& ||\bfr^*-\bfbr||_2^2 +2-2\bfr^*[i]+2\bfbr[i]+2\bfr^*[j]-2\bfbr[j] \\
&<& ||\bfr^*-\bfbr||_2^2.
\end{eqnarray*}
}
\vspace{-15pt}

\noindent{This} contradicts the assumption $\bfr^*$ is the vector closest to $\bfbr$.

Now, we prove the claim.
We grow a alternating path (w.r.t. $M^*$) tree rooted at $i$ in a BFS manner:
at odd depth, we extend all edges in $M^*$ and at even depth, we extend all edge not in $M^*$.
Let $O\subseteq V$ be the set of nodes at odd depth ($i$ is at depth $1$) and $E\subseteq U$
the set of nodes at even depth.
It is easy to see $N_B(E)=O$, $E\subseteq N_B(O)$ and $\sum_{v\in O}\bfr^*[v]=|E|$.
Suppose $\bfr^*[v]\geq \bfbr[v]$ for all $v$ and $\bfr^*[i]\geq \bfbr[i]$.
However, the contradiction follows since:
$$ |E|=\sum_{v\in O}\bfr^*[v]>\sum_{v\in O}\bfbr[v]=\sum_{v\in O}\sum_{u\in N_B(O)}\bfP[u,v] $$
$$ \ \ \ \ \ \ \ \ \ \ \ \ \ \ \ \ \ \ \ \ \ \ \ \ \ \ \ \ \ \ \ \ \ \ \ \ \ \ \ \ \ \ \ \ \ \ \ = \sum_{v\in O}\sum_{u\in E}\bfP[u,v] = |E|.  $$
\qed
\end{proof}

With Lemma \ref{lm_ceilingfloor} at hand, we can construct the following min-cost network flow
instance to compute the vector $\bfr^*$ closest to $\bfbr$.
Add to $B$ a source $s$ and a sink $t$.
Add edges $(s,u)$ with capacity upper bound $1$ for all $u\in U$.
For each $v\in V$ and $\bfbr[v]$ is not integer, add two edges $e_1(v,t)$ and $e_2(v,t)$.
$e_1(v,t)$ has both lower and upper bound of capacity $\lfloor \bfbr[v] \rfloor$
and $e_2(v,t)$ has capacity upper bound $1$ and cost
$(\lceil \bfbr[v] \rceil-\bfbr[v])^2-(\lfloor \bfbr[v] \rfloor-\bfbr[v])^2$.
If $\bfbr[v]$ is a integer, we only add $e_1(v,t)$.
We find a min-cost integral flow of value $n$ on this network.
For any $v$ such that $e_2(v,t)$ is saturated,
we set $\bfr^*[v]$ to be $\lceil \bfbr \rceil$ and $\lfloor \bfbr \rfloor$ otherwise.
Such a flow with minimum cost suggests the optimality of
the vector $\bfr^*$ due to Lemma \ref{lm_ceilingfloor}.

\begin{theorem}
There is a polynomial time algorithm for finding the vector $\bfr^*$ to $\bfbr$
such that $\bfr^*$ corresponds to some possible answer with non-zero probability.
\end{theorem}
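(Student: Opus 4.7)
The plan is to invoke Lemma \ref{lm_ceilingfloor} to drastically restrict the search space, then show that the remaining decision problem (for each coordinate $v$, choose between $\lfloor\bfbr[v]\rfloor$ and $\lceil\bfbr[v]\rceil$, subject to realizability as an $\bfr$-matching) can be encoded as a min-cost flow problem on the natural extension of the bipartite graph $B$.

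First I would set up the network exactly as described: source $s$ with unit capacity edges to every $u\in U$, the original edges of $B$ with unit capacities directed $U\to V$, and for each $v\in V$ a pair of parallel edges to the sink $t$ — one ``mandatory'' edge $e_1(v,t)$ with both lower and upper bound $\lfloor\bfbr[v]\rfloor$, and one ``optional'' edge $e_2(v,t)$ with capacity $1$ and cost equal to the marginal $L_2^2$ penalty $(\lceil\bfbr[v]\rceil-\bfbr[v])^2 - (\lfloor\bfbr[v]\rfloor-\bfbr[v])^2$ for rounding up instead of down (integer-valued $\bfbr[v]$ only get $e_1$). Any feasible integer flow of value $n$ corresponds to an assignment of each tuple to a group such that each group $v$ receives either $\lfloor\bfbr[v]\rfloor$ or $\lceil\bfbr[v]\rceil$ tuples; the cost is, up to a constant offset $\sum_v (\lfloor\bfbr[v]\rfloor-\bfbr[v])^2$ independent of the flow, exactly $\|\bfr^*-\bfbr\|_2^2$. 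A min-cost integral flow therefore directly yields $\bfr^*$.

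The main obstacle is feasibility: I must show that a flow of value $n$ respecting the lower bounds exists at all. Since $\sum_v \bfbr[v] = n$ and thus $\sum_v \lfloor\bfbr[v]\rfloor \le n \le \sum_v \lceil\bfbr[v]\rceil$, the total capacity into $t$ lies in an interval that contains $n$, so the coordinate-wise budget matches. What remains is the bipartite matching constraint. I would prove this by a Hall-type argument applied to the rounded-down target vector $\bftr = \lfloor\bfbr\rfloor$: for any subset $S\subseteq V$,
\[
\sum_{v\in S} \bftr[v] \;\le\; \sum_{v\in S}\bfbr[v] \;=\; \sum_{v\in S}\sum_{u\in U} \bfP[u,v] \;=\; \sum_{u\in N_B(S)} \sum_{v\in S}\bfP[u,v] \;\le\; |N_B(S)|,
\]
which by Hall's theorem guarantees a $\bftr$-matching. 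One can then extend this matching to a flow of value $n$ by using the optional edges $e_2(v,t)$, exactly as in the proof of Lemma \ref{lm_ceilingfloor} (any alternating path argument used there to reroute a single tuple can be iterated to absorb the remaining $n - \sum_v \lfloor\bfbr[v]\rfloor$ units of flow).

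Finally, since the network has $O(n+m)$ nodes and $O(|E|)$ edges with polynomially bounded integer capacities, a min-cost integral flow can be computed in polynomial time by any standard algorithm (e.g., successive shortest paths after reducing lower bounds to standard form). Combined with Lemma \ref{lm_ceilingfloor} to certify that the optimal $\bfr^*$ has the assumed ceiling/floor form, this gives the claimed polynomial-time algorithm, and since $\|\bfbr-\bfr^*\|_2 \le \|\bfbr-\bfr_{\text{opt}}\|_2$ for the true median $\bfr_{\text{opt}}$, triangle inequality yields the $4$-approximation for the median answer announced earlier.
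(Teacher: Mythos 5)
Your proposal is correct and takes essentially the same route as the paper: the identical min-cost flow network (mandatory lower-bounded floor edges $e_1(v,t)$, optional unit-capacity ceiling edges $e_2(v,t)$ with the marginal squared-penalty cost), with Lemma \ref{lm_ceilingfloor} certifying that restricting to floor/ceiling values loses nothing. The only difference is your explicit Hall-type feasibility argument, which the paper omits; it is a welcome addition but not strictly needed, since feasibility of a value-$n$ flow already follows from the existence of the closest possible world exhibited in Lemma \ref{lm_ceilingfloor}.
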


\noindent{Finally}, we can prove that: 
\begin{corollary}
There is a polynomial time deterministic 4-approximation
for finding the median aggregate answer.
\end{corollary}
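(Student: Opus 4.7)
The plan is to argue that $\mathbf{r}^*$, the possible answer closest to the mean $\bar{\mathbf{r}}$ (which we can compute via the min-cost flow construction of the previous theorem), is automatically within a factor $4$ of the optimal median answer in expected squared distance. The argument is essentially a ``nearest point to the centroid'' bound, exploiting two classical facts about squared $\ell_2$ distance: its bias-variance decomposition, and the inequality $\|a-b\|_2^2 \le 2\|a-c\|_2^2+2\|c-b\|_2^2$ that follows from $(x+y)^2\le 2x^2+2y^2$.

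Let $\mathbf{r}^{opt}$ denote the true median answer, i.e., the possible answer minimizing $\Exp[\|\mathbf{x}-\tau_{pw}\|_2^2]$, and let $\mathrm{OPT} = \Exp[\|\mathbf{r}^{opt}-\tau_{pw}\|_2^2]$. First, I would write down the bias-variance identity
\[
\Exp[\|\mathbf{x}-\tau_{pw}\|_2^2] \;=\; \|\mathbf{x}-\bar{\mathbf{r}}\|_2^2 \;+\; \Exp[\|\bar{\mathbf{r}}-\tau_{pw}\|_2^2]
\]
which holds for any deterministic $\mathbf{x}$, since $\bar{\mathbf{r}}=\Exp[\tau_{pw}]$ makes the cross term vanish. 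Applying this to $\mathbf{x}=\mathbf{r}^{opt}$ gives two bounds for free: $\|\mathbf{r}^{opt}-\bar{\mathbf{r}}\|_2^2 \le \mathrm{OPT}$ and $\Exp[\|\bar{\mathbf{r}}-\tau_{pw}\|_2^2] \le \mathrm{OPT}$.

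Next, because $\mathbf{r}^*$ is by construction the possible answer closest to $\bar{\mathbf{r}}$, and $\mathbf{r}^{opt}$ is a possible answer, we have $\|\mathbf{r}^*-\bar{\mathbf{r}}\|_2^2 \le \|\mathbf{r}^{opt}-\bar{\mathbf{r}}\|_2^2 \le \mathrm{OPT}$. Now I would bound the expected distance from $\mathbf{r}^*$ to a random world $\tau_{pw}$ by splitting through $\bar{\mathbf{r}}$:
\[
\Exp[\|\mathbf{r}^*-\tau_{pw}\|_2^2] \;\le\; 2\|\mathbf{r}^*-\bar{\mathbf{r}}\|_2^2 \;+\; 2\Exp[\|\bar{\mathbf{r}}-\tau_{pw}\|_2^2] \;\le\; 2\,\mathrm{OPT}+2\,\mathrm{OPT} \;=\; 4\,\mathrm{OPT}.
\]
Combined with the previous theorem, which guarantees that $\mathbf{r}^*$ can be found deterministically in polynomial time via the min-cost flow formulation, this yields the claimed deterministic polynomial-time $4$-approximation.

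I do not anticipate any real obstacle here: the output $\mathbf{r}^*$ of the min-cost flow is a genuine possible answer by Lemma~\ref{lm_ceilingfloor} and the feasibility of the flow instance, and both ingredients used above (the parallelogram-style inequality and the bias-variance identity) are specific to squared $\ell_2$, which is exactly the metric fixed in Section~\ref{sec:aggregates}. The only small care needed is to verify that the flow does admit a feasible integral flow of value $n$ so that $\mathbf{r}^*$ is actually realized as a matching, but this follows from a Hall-type argument analogous to the one used in the proof of Lemma~\ref{lm_ceilingfloor}.
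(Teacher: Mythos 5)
Your proof is correct and follows essentially the same route as the paper: split $\Exp[\dist(\mathbf{r}^*,\tau_{pw})]$ through the mean $\bar{\mathbf{r}}$ via the doubling inequality for squared $\ell_2$, then bound both resulting terms by $\mathrm{OPT}$ using the two optimality facts (that $\mathbf{r}^*$ is the possible answer nearest to $\bar{\mathbf{r}}$, and that $\bar{\mathbf{r}}$ minimizes the expected squared distance). The only cosmetic difference is that you justify the two $\le \mathrm{OPT}$ bounds via the bias--variance identity and a comparison with $\mathbf{r}^{opt}$, whereas the paper compares $\dist(\mathbf{r}^*,\bar{\mathbf{r}})$ pointwise against the random answer; both are valid and yield the same factor of $4$.
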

\begin{proof}
Suppose $\bfr^*$ is the answer closest to the mean answer $\bfbr$ and
$\bfr^m$ is the median answer. Let $\bfr$ be the vector corresponding to the random answer. Then:

\vspace{-12pt}
{\small
\begin{eqnarray*}
\Exp[\dist(\bfr^*,\bfr)]
&\leq& \Exp[2(\dist(\bfr^*,\bfbr)+\dist(\bfbr,\bfr))]
=    2\left(\dist(\bfr^*,\bfbr)+\Exp[\dist(\bfbr,\bfr)]\right) \\
&\leq& 4\Exp[\dist(\bfbr,\bfr )] \leq 4\Exp[\dist(\bfr^m,\bfr )].
\end{eqnarray*}
}
\end{proof}

\subsection{Clustering}
\label{sec:consensus}

\ignore{
Conceptually the definition would be that: find the clusterheads for each possible world, and then
take their ``mean''. This appears to be quite complicated though. Given two k-sets of tuples (each
corresponding to the answer in a different possible world), the natural distance function is the
weight of the minimum bipartite matching between these two k-sets, and it is not clear how to
incorporate something like that into the answer.

\amolnote{It appears that some form of weighted distance computation is already happening in our
definition of clustering in the paper. But it is not clear to me what its semantics are.}
}
The \concluster\ problem is defined as follows: given $k$ clusterings $\calC_1,\ldots,\calC_k$ of $V$, find a
clustering $\calC$ that minimizes $\sum_{i=1}^k \dist(\calC,\calC_i)$.
In the setting of probabilistic databases,
the given clusterings are the clusterings in the possible worlds, weighted
by the existence probability. The main problem with extending the notion
of consensus answers to clustering is that the input clusterings are not well-defined
(unlike ranking where the score function defines the ranking in any world).
We consider a somewhat simplified version of the problem, where we assume that
two tuples $t_i$ and $t_j$ are clustered together in a possible world, if and only if they take
the same value for the value attribute $A$ (which is uncertain).
Thus, a possible world $pw$ uniquely determines a clustering $\calC_{pw}$.
We define the distance between two clustering $\calC_1$ and $\calC_2$
to be the number of unordered pairs of tuples that are clustered together in $\calC_1$,
but separated in the other (the \concluster\ metric).
To deal with nonexistent keys in a possible world, we artifically create a cluster containing
all of those.

Our task is to find a mean clustering $\calC$ such that $\Exp[\dist(\calC,\calC_{pw})]$.
Approximation with factor of $4/3$ is known for \concluster~\cite{journal/jacm/ailon08}, and can be adapted to our problem
in a straightforward manner.
In fact, that approximation algorithm simply needs $w_{t_i,t_j}$ for all $t_i,t_j$, where
$w_{t_i,t_j}$ is the fraction of input clusters that cluster $t_i$ and $t_j$ together, and can
be computed as:
$ w_{t_i,t_j} ={\sum_{a\in A}\Prob(i.A=a\wedge j.A=a)}$.

\smallskip
To compute these quantities given an and/xor tree, we
associate a variable $x$ with all leaves with value $(i,a)$ and $(j,a)$, and constant $1$ with the other leaves.
From Theorem \ref{thm_generating}, $\Prob(i.A=a\wedge j.A=a)$ is simply the coefficient of $x^2$
in the corresponding generating function.

\newpage
\section{Conclusion}
We addressed the problem of finding a single representative answer to a query over probabilistic
databases by generalizing the notion of inconsistent information integration. We believe this
approach provides a systematic and formal way to reason about the semantics of probabilistic query
answers, especially for \Topk\ queries. Our initial work has opened up many interesting
avenues for future work. These include design of efficient exact and approximate algorithms for finding consensus
answers for other types of queries,
exploring connections to safe plans, and understanding the semantics of
the other previously proposed ranking functions using this framework.

{\footnotesize
\bibliographystyle{plain}
\bibliography{probdb}
}
\end{document}